\documentclass[12pt,a4paper]{article}
\usepackage[dvipsnames]{xcolor}
\usepackage{jheppub}
\usepackage{amssymb}
\usepackage{amsmath}
\usepackage{amsthm}
\usepackage{mathtools}
\usepackage{amsfonts}
\usepackage{dsfont}
\usepackage{mathrsfs}
\usepackage{tikz}
\usepackage{orcidlink}

\hypersetup{
    pdfencoding=unicode,
    colorlinks=true,
    urlcolor=Maroon,
    linkcolor=RoyalBlue,
    citecolor=Maroon,
    pdftitle={The Super Virasoro Minimal String},
    pdfauthor={Lorenz Eberhardt},
    pdfdisplaydoctitle=true,
    pdfstartview=FitH,
    linktocpage=true
}

\newtheorem{lemma}{Lemma}[section]
\newtheorem{proposition}{Proposition}[section]

\allowdisplaybreaks[1]

\DeclareMathOperator\rk{rk}
\DeclareMathOperator\Aut{Aut}
\DeclareMathOperator\td{td}
\DeclareMathOperator\ch{ch}
\DeclareMathOperator\std{std}

\newcommand\Res{\mathop{\mathrm{Res}}}

\newcommand{\be}{\begin{equation}}
\newcommand{\ee}{\end{equation}}
\newcommand{\ba}{\begin{align}}
\newcommand{\ea}{\end{align}}
\newcommand{\bd}{\begin{tikzpicture}}
\newcommand{\ed}{\end{tikzpicture}}

\renewcommand\d{\mathrm{d}}

\newcommand\CC{\mathbb{C}}
\newcommand\ZZ{\mathbb{Z}}
\newcommand\RR{\mathbb{R}}

\newcommand\CP{\mathbb{CP}}

\newcommand\PSL{\mathrm{PSL}}

\newcommand\Sp{\mathrm{Sp}}

\newcommand\SL{\mathrm{SL}}

\newcommand\OSp{\mathrm{OSp}}

\newcommand\Map{\mathrm{Map}}

\newcommand{\bM}{\overline{\cM}}

\newcommand{\cM}{\mathcal{M}}
\newcommand{\cF}{\mathcal{F}}
\newcommand{\cB}{\mathcal{B}}
\newcommand{\cN}{\mathcal{N}}
\newcommand{\cH}{\mathcal{H}}
\newcommand{\fM}{\mathfrak{M}}
\newcommand{\fT}{\mathfrak{T}}
\newcommand{\fL}{\mathfrak{L}}
\newcommand{\bfM}{\overline{\mathfrak{M}}}
\newcommand{\bP}{\boldsymbol{P}}
\newcommand{\bsa}{\boldsymbol{a}}
\newcommand{\Tgn}{\Theta_{g,n}}

\title{The Super Virasoro Minimal String \\ from 3d Supergravity}

\author{Lorenz Eberhardt \orcidlink{0000-0003-1912-2211}}\emailAdd{l.eberhardt@uva.nl}

\affiliation{Institute for Theoretical Physics,
University of Amsterdam, Amsterdam, 1098XH, NL}

\abstract{The super Virasoro minimal string is defined by coupling spacelike and timelike super Liouville theories on the worldsheet. There are four different theories 0A$^\pm$ and 0B$^\pm$ depending on discrete choices on the worldsheet. We show that these theories arise naturally from quantization of 3d supergravity, and the amplitudes compute the dimension ($+$) or superdimension ($-$) of the space of $\cN=1$ superconformal blocks modulo crossing symmetry.
Both 0A$^+$ and 0B$^+$ are perturbatively dual to the same matrix integral as the bosonic Virasoro minimal string, while 0B$^-$ is perturbatively dual to a matrix integral with an inverse square root singularity. We show that all non-trivial perturbative amplitudes of the 0A$^-$ theory vanish.
}

\begin{document}

\maketitle

\makeatletter
\g@addto@macro\bfseries{\boldmath}
\makeatother
\section{Introduction}

Non-critical string theories are a particularly rich and fruitful arena to explore worldsheet string theory and its dualities. In works of the 1990s, it was realized that the $(p,q)$ minimal string exhibits a dual description in terms of a matrix integral \cite{Brezin:1990rb, Douglas:1989ve, Gross:1989vs} and the $c=1$ string in terms of matrix quantum mechanics \cite{Klebanov:1991qa, Ginsparg:1993is, Jevicki:1993qn, Polchinski:1994mb}. 
The original motivation for these dualities arises from considering multicritical limits of triangulated surfaces. Another route relates the minimal string to topological strings on a non-compact Calabi-Yau 3-fold \cite{Dijkgraaf:2002fc}. From a worldsheet perspective such dualities are typically obscured and amount to seemingly miraculous predictions for moduli space integrals. 
More recently, further examples of such correspondences were found such as the Virasoro minimal string \cite{Collier:2023cyw} and the complex Liouville string \cite{Collier:2024kmo}. 

Recent years have shown an improved handle on the worldsheet theories, which made it possible to directly derive parts of these dualities from the worldsheet \cite{Zamolodchikov:2003yb, Belavin:2006ex, Collier:2024kwt, Khromov:2025awh} and subject them to stringent numerical tests \cite{Belavin:2008kv,Balthazar:2017mxh, Collier:2023cyw, Collier:2024kwt}. In particular, the Virasoro minimal string and the complex Liouville string are especially tractable since one can exploit analyticity of their amplitudes.
\medskip 

The Virasoro minimal string (VMS) is in a sense the simplest theory among those mentioned above. Its worldsheet theory consists of spacelike Liouville theory at central charge $c$ coupled to timelike Liouville theory at central charge $26-c$. As such, its amplitudes depend continuously on $c$ and the external Liouville momenta $P_i$. It was noted in \cite{Collier:2023cyw} that this theory counts the number of crossing symmetric linear combinations of conformal blocks on the surface $\Sigma_{g,n}$. It can thus be viewed as a gravitational analogue of Verlinde's formula \cite{Verlinde:1988sn}.

This relation arises from quantization of 3d gravity with negative cosmological constant, whose phase space is Teichm\"uller space \cite{Collier:2023fwi}. The wavefunctions can be identified with conformal blocks \cite{Verlinde:1989ua, Teschner:2003em} and their inner product involves integration over Teichm\"uller space, exactly of the kind that appears in the VMS. The worldsheet integral computes the number of such conformal blocks which can be computed from an index theorem (Atiyah's $L^2$ index theorem \cite{Atiyah:1976vna}). This reduces the worldsheet integral to a cohomological expression on the moduli space $\bM_{g,n}$ for which well-established mathematical methods exist. In particular, such expressions are well-known to be reproduced by a dual matrix integral.
The density of states of this matrix integral is the chiral half of the Cardy density of states of 2d CFTs 
\be 
\rho(E)=\frac{2\pi \sinh(2\pi b \sqrt{E})\sinh(2\pi b^{-1} \sqrt{E})}{\sqrt{E}}\ , \label{eq:VMS density of states}
\ee
with $E=h-\frac{c-1}{24}$. It features the characteristic square root edge at $E=0$. Based on this density of states, one can apply the topological recursion procedure to compute any perturbative amplitude \cite{Eynard:2011ga}.
The VMS amplitudes are called the quantum volumes since they recover the Weil-Petersson volumes of moduli space in the $c \to \infty$ limit. Thus, the VMS constitutes a deformation of JT gravity. 

Similar cohomological formulas have been proposed for the amplitudes of the $c=1$ string \cite{Collier:2026pxi}, the $(2,p)$ minimal string \cite{Artemev:2024rck} and the complex Liouville string \cite{Collier:2024kmo}.
In each of these cases, their amplitudes admit an expansion in Feynman diagrams in which the quantum volumes serve as the vertex factors.

Various supersymmetric generalizations of these theories and their dualities have been studied, such as for the $(p,q)$ minimal string \cite{Seiberg:2003nm, Klebanov:2003wg, Maldacena:2005he} and the $c=1$ string \cite{Takayanagi:2003sm, Douglas:2003up, Balthazar:2022atu}. Surprisingly, the supersymmetric analogue of these theories is often closely related to their bosonic avatars, a fact that is also obscured from the worldsheet theory. 
\medskip

In this paper, we consider the super Virasoro minimal string from a 3d gravity perspective. Some aspects of the theory have already been studied from the worldsheet \cite{Muhlmann:2025ngz, Rangamani:2025wfa} and the matrix integral perspective \cite{Johnson:2024fkm, Johnson:2025vyz}. 
On the worldsheet, the super VMS is defined by coupling $\cN=1$ spacelike Liouville theory of central charge $c=\frac{3}{2}+3(b+b^{-1})^2$ to $\cN=1$ timelike Liouville theory of central charge $15-c$.

The super Virasoro minimal string can also be obtained by studying the quantization of $\cN=1$ supergravity in three dimensions. As a consequence, the worldsheet amplitudes count the number of crossing symmetric combinations of $\cN=1$ superconformal blocks on the (super) surface $\Sigma_{g,n}$. There are two discrete signs in this counting. For only NS-sector punctures, the spin structure on $\Sigma_{g,n}$ can be even or odd and one can choose to weigh the odd spin structures by a sign $(-1)^\zeta$ in the counting. The unweighted count corresponds to type 0B string theory which does feature both NS- and R-sector vertex operators, while the weighted count corresponds to type 0A string theory, which only has NS-sector vertex operators.\footnote{This convention is standard in string perturbation theory, see e.g.\ \cite{Klebanov:2003wg, Kaidi:2019tyf}, but opposite to the convention employed in super JT gravity \cite{Stanford:2019vob} and in a previous treatment of the super Virasoro minimal string \cite{Johnson:2024fkm, Johnson:2025vyz}. The origin of this disagreement was discussed in \cite[Appendix A.4]{Stanford:2019vob}. Since our discussion centers around string perturbation theory, we follow that convention. \label{footnote:0A/0B convention}}

Additionally, we can choose to include a sign $(-1)^\text{F}$ into the counting. We denote the theories without $(-1)^\text{F}$ by 0A$^+$/0B$^+$ and those with $(-1)^\text{F}$ counting by 0A$^-$/0B$^-$. The appearance of the same sign was already observed in the worldsheet theory in \cite{Rangamani:2025wfa, Muhlmann:2025ngz}, where it corresponds to a sign flip of a chiral supercharge in one of the Liouville theories.

One can count the number of superconformal blocks in terms of an index theorem on super moduli space. We show that the resulting expression can be reduced to its bosonic subspace, the moduli space of spin curves. Standard theorems in algebraic geometry \cite{Chiodo:2008} allow us to write down explicit intersection number expressions for the super quantum volumes. We further carry out the sum over spin structures and reduce everything to standard intersection numbers on the moduli space of curves. For the 0A/B$^+$ theories, we obtain compact formulas in terms of the basic $\kappa$ and $\psi$-classes on $\bM_{g,n}$, while the 0A/B$^-$ cases additionally lead to the appearance of the so-called Theta class (or a variation thereof) \cite{Norbury:2017eih}.
From these expressions, we show that all these theories can be described via a matrix integral. More precisely, we show that
\begin{enumerate}
    \item Type 0A$^+$ is perturbatively \emph{equivalent} to bosonic VMS and shares the same topological recursion.
    \item Type 0B$^+$ is also perturbatively equivalent to bosonic VMS. Moreover, amplitudes involving Ramond-sector vertex operators are identical (up to normalization) to amplitudes with only NS-sector vertex operators.
    \item Apart from the sphere two-point function, all perturbative amplitudes of type 0A$^-$ vanish identically. This vanishing is non-trivial. We prove the vanishing of the corresponding intersection numbers by employing cohomological field theory techniques in appendix~\ref{app:arf theta class}. As far as we are aware, this vanishing theorem has not appeared in the mathematical literature.
    \item Type 0B$^-$ with NS-sector vertex operators is dual to a matrix integral with density of states
\be 
\rho(E)=\frac{2\pi \cosh(2\pi b \sqrt{E})\cosh(2\pi b^{-1} \sqrt{E})}{\sqrt{E}}\ , \label{eq:super VMS density of states cosh}
\ee
which contrary to \eqref{eq:VMS density of states} features an inverse square root singularity as $E \to 0$. As a consequence this model behaves quite differently to the 0A/B$^+$ theories. It is governed by the Br\'ezin-Gross-Witten $\tau$-function \cite{Brezin:1980rk, Gross:1980he,Chidambaram:2022cqc, Norbury:2020vyi} and all its tree-level amplitudes vanish (but higher genus amplitudes are non-zero).
\end{enumerate}
\medskip
This paper is organized as follows. In section~\ref{sec:quantization}, we discuss 3d supergravity, its quantization and relation to the super VMS and the cohomological reduction via the index theorem. We also discuss the origin of the four different theories. In section~\ref{sec:intersection theory}, we further analyze the resulting cohomological expressions and explain how they can be reduced to intersection numbers on ordinary moduli space, which ultimately allows us to relate them to a matrix integral. We end with a short discussion in section~\ref{sec:discussion}. In appendix~\ref{app:hard topological recursion}, we review topological recursion for density of states with inverse square root singularity and its relation to intersection theory. In appendix~\ref{app:arf theta class}, we prove the vanishing of the intersection numbers of the type 0A$^-$ theory.

\section{Super VMS amplitudes from 3d supergravity} \label{sec:quantization}
We will study chiral 3d $\cN=(1,1)$ supergravity with negative cosmological constant in canonical quantization. Our goal in this section is to give a description of the Hilbert space on a spatial Cauchy slice. The logic is in parallel to the construction of bosonic Virasoro TQFT \cite{Collier:2023fwi}. Some aspects of the quantization were also worked out in \cite{Bhattacharyya:2024vnw}.
\subsection{Chern-Simons theory and phase space}
\paragraph{Phase space from Chern-Simons theory.} At the level of the action and the equations of motion $\cN=(1,1)$ supergravity is equivalent to Chern-Simons theory with gauge group $(\OSp(1|2)/\ZZ_2) \times (\OSp(1|2)/\ZZ_2)$ \cite{Witten:1988hc}. Since the gauge group factorizes, we can discuss quantization of one factor separately. The two factors should be thought of as holomorphic and antiholomorphic states. We refer to the theory with a single $\OSp(1|2)/\ZZ_2$ factor as chiral $\cN=1$ supergravity. 

There are various global issues in this identification. We picked the particular global form $\OSp(1|2)/\ZZ_2$ of the gauge group. In the action, only the algebra $\mathfrak{osp}(1|2)$ is visible. The bosonic subgroup of $\OSp(1|2)$ is $\mathrm{O}(1) \times \Sp(2,\RR) \cong \ZZ_2 \times \SL(2,\RR)$. The $\ZZ_2$ factor acts by a minus sign on the fermionic directions. We thus think of it as $(-1)^\text{F}$. Bosons transform in $\SL(2,\RR)$ representations that originate from the quotient $\mathrm{SO}(1,2) \cong \PSL(2,\RR) \cong \SL(2,\RR)/\ZZ_2$, while fermions transform in spinorial representations of the local Lorentz group $\mathrm{SO}(1,2)$. This means that the correct bosonic subgroup that encapsulates all the representations correctly is $((-1)^\text{F} \times \SL(2,\RR))/\ZZ_2$. This explains our choice of global form $\OSp(1|2)/\ZZ_2$.

The phase space of Chern-Simons theory is well-known. Fix a Cauchy slice $\Sigma_{g,n}$. Each puncture carries a label specifying Neveu-Schwarz (antiperiodic boundary conditions for the fermions) or Ramond type (periodic boundary conditions), together with a group-valued monodromy.

Naively, an initial condition amounts to choosing an $\OSp(1|2)/\ZZ_2$ gauge field on $\Sigma_{g,n}$. The gauge symmetry of the Chern-Simons theory imposes the flatness constraint on the gauge bundle on the initial surface and identifies gauge equivalent bundles. The constrained phase space of the Chern-Simons theory is therefore the space of flat $\OSp(1|2)/\ZZ_2$ bundles\footnote{We use gothic letters for supermanifolds.}
\be 
\fM_{\OSp(1|2)/\ZZ_2} \cong \{ \rho:\pi_1(\Sigma_{g,n}) \longrightarrow \OSp(1|2)/\ZZ_2 \}/(\OSp(1|2)/\ZZ_2)\ , \label{eq:MOSp12}
\ee
i.e.\ the space of all homomorphisms of the fundamental group given by the monodromy representation.

The phase space \eqref{eq:MOSp12} is disconnected. Only some of the connected components represent regular metrics and gravitinos on the Cauchy surface. In the bosonic case, it is known that one can consistently restrict the phase space to such a regular component known as Teichm\"uller space. Teichm\"uller space describes hyperbolic surfaces (together with a choice of $a$ and $b$-cycles) where the flat $\PSL(2,\RR)$ bundle is realized as the tangent bundle of the surface. In the supersymmetric case, the situation is analogous: one can pick out the super Teichm\"uller components of this phase space. We will discuss their structure further below.

Thus we conclude that the phase space is super Teichm\"uller space $\fT_{g,\bsa}$. Here, $\bsa=(a_1,\dots,a_n)$ labels the R- and NS-sectors with the convention
\be 
a_i= \begin{cases}
    1 \ , \quad &\text{NS-puncture}\ , \\
    0 \ , \quad &\text{R-puncture}\ .
\end{cases} \label{eq:a NS-R sector dictionary}
\ee
We will discuss its geometry further below.
\paragraph{K\"ahler quantization.} Super Teichm\"uller space has not only a description in terms of moduli of super hyperbolic surfaces, but also as moduli of super \emph{Riemann} surfaces \cite{Crane:1986uf}. For the purposes of the quantization, it is useful to switch to the complex viewpoint. 
Quantization of ordinary Teichm\"uller space is a well-developed topic in the mathematics literature \cite{Kashaev:1998fc, Chekhov:1999tn, Teschner:2003em} with recent works establishing extensions to super Teichm\"uller space \cite{Aghaei:2020otq}. For our purposes, we will follow the route described in \cite{Verlinde:1989ua} for the quantization of ordinary Teichm\"uller space. In that paper, a convenient choice of polarization for the quantization in the bosonic case was made inspired by the AdS boundary conditions. One can solve the flatness constraints of the Chern-Simons quantization at the quantum level. Ultimately, this reduces the wavefunction to a holomorphic function on Teichm\"uller space that satisfies the Virasoro Ward identities, i.e.\ carries the same conformal anomaly as a conformal block of central charge $c$ and conformal weights $h_i$ associated to the punctures. The central charge $c$ is the Brown-Henneaux central charge \cite{Brown:1986nw} $c=\frac{3\ell_\mathrm{AdS}}{2G_\text{N}}$, up to scheme dependent higher loop corrections. The conformal weight is related to the monodromy of the gauge field around the puncture. When parametrizing the conformal weight via the Liouville momentum as $h=\frac{c-1}{24}+P^2$, the relation is semiclassically $P \sim \frac{L}{4\pi b}$ \cite{Teschner:2003at}, with $L$ the geodesic length. 

In the supersymmetric case, an analogous procedure should carry through: the $\cN=(1,1)$ Chern-Simons formulation admits the same polarization choice, and the flatness constraints reduce to the $\cN=1$ super Virasoro Ward identities. The punctures are again labelled by the conformal weight data as well as the label $a \in \{0,1\}$ specifying NS- or R- sector as in \eqref{eq:a NS-R sector dictionary}. The Hilbert space is therefore spanned by $\cN=1$ superconformal blocks of central charge $c$. This is clear from the perspective of quantizing the constrained phase space. Superconformal blocks are holomorphic objects on super Teichm\"uller space (moduli of the super Riemann surface up to crossing transformations) specified by the central charge, the conformal weights and the external NS/R-sector labels. The internal conformal weights and NS/R-sector labels should be associated to the index labelling the basis. Therefore, it will not be necessary to repeat the analysis of \cite{Verlinde:1989ua} in the supersymmetric setting for our purposes. 

This expectation on the relation of the wavefunctions to superconformal blocks is also borne out of the AdS$_3$/CFT$_2$ correspondence. The path integral of Euclidean $\cN=(1,1)$ supergravity on a 3d manifold $M$ produces a wavefunction on the asymptotic AdS boundary $\partial M$. From AdS/CFT, we expect that the wavefunction carries the anomalies of a 2d CFT, see \cite{Henningson:1998gx, Coussaert:1995zp}. Since we are only analyzing a chiral half of 3d supergravity, it should follow that the space of wavefunctions is indeed spanned by superconformal blocks as detailed above.

Going forward, we will employ the following standard parametrization of the central charge and conformal weights inspired from $\cN=1$ Liouville theory
\be 
c=\frac{3}{2}+3Q^2\ , \qquad Q=b+b^{-1}\ , \qquad h=\frac{Q^2}{8}+\frac{1}{16} \, \delta_{a,0}+\frac{P^2}{2}\ . \label{eq:c h parametrization}
\ee
The shift of $\frac{1}{16}$ in the Ramond sector is the familiar Ramond-sector ground state energy.

\paragraph{Super Teichm\"uller space.} Let us explain some important details of super Teichm\"uller space. It is a supermanifold of dimension
\be 
\dim_\CC \fT_{g,\bsa}= 3g-3+n \mid 2g-2+\tfrac{1}{2}(|\bsa|+n)\ .
\ee
Here, we separate the bosonic and fermionic dimension by $\mid$. We also defined $|\bsa|= \sum_i a_i$. In other words, NS-punctures contribute $1 \mid 1$ to the dimension, while R-punctures contribute $1 \mid \frac{1}{2}$ to the dimension. In particular, the number of R-punctures has to be even for super Teichm\"uller space to exist. 

The bosonic subspace of super Teichm\"uller space is the Teichm\"uller component of the moduli space of flat $\SL(2,\RR)$ bundles (not flat $\PSL(2,\RR)$ bundles). This space was studied in detail by Goldman \cite{Goldman:1988} who showed that the Teichm\"uller component has $2^{2g}$ connected components corresponding to the lift of the $\PSL(2,\RR)$ generators associated to the $2g$ cycles of the surface to the double cover. They correspond to the $2^{2g}$ choices of spin structures on the surface.

\subsection{The Hilbert space and super Virasoro TQFT}
\paragraph{Generalities on the inner product.} In K\"ahler quantization, one constructs a holomorphic line bundle $\mathcal{L}$ whose first Chern class agrees with the symplectic form on phase space. Wavefunctions $\Psi$ are holomorphic sections of this line bundle.
One can write down an inner product on the Hilbert space involving the K\"ahler potential $\mathcal{K}$. It was shown in \cite{Verlinde:1989ua} and later refined in \cite{Collier:2023fwi} that the appropriate choice for $\mathrm{e}^{-\mathcal{K}}$ in bosonic 3d gravity is timelike Liouville theory. To motivate this, we need to recall the role of the K\"ahler potential in the inner product, which schematically takes the following form in K\"ahler quantization
\be
\langle \Psi \mid \Psi' \rangle=\int_{\cM_\text{phase}} \d \mu_\text{ghosts}\, \Psi^* \Psi'\, \mathrm{e}^{-\mathcal{K}}\ ,
\ee
where $\d \mu_\text{ghosts}$ is the appropriate ghost measure on the phase space $\cM_\text{phase}$.
The choice of the K\"ahler potential has to achieve two things: it has to render the phase space integral well-defined, i.e.\ absorb the cocycle factors from the wavefunctions, and has to realize a possible group action by a group $\Gamma$ on $\cM_\text{phase}$ in terms of unitary operators on the Hilbert space. This implies that $\mathcal{K}$ has to be $\Gamma$-invariant.

In our case, $\cM_\text{phase}$ is super Teichm\"uller space and we will denote the line bundle by $\fL_{\bP,\bsa}^{(b)}$. It captures the conformal anomaly (labelled by $b$), the weight of the punctures (labelled by $\bP$) as well as whether they are NS- or R-punctures (labelled by $\bsa$) \cite{Friedan:1986ua}. $\Gamma$ will be the group of crossing transformations (the mapping class group).
The discussion of integrals over (super) Teichm\"uller space is in parallel with (super)string theory integrals, where one integrates over moduli space \cite{Witten:2012bg}. The two spaces differ only globally. 

\paragraph{Explicit form of the inner product.}
Let us first recall the bosonic case. There, the exponentiated K\"ahler potential $\mathrm{e}^{-\mathcal{K}}$ has to carry central charge $\hat{c}=26-c$ and conformal weights $\hat{h}_i=1-h_i$. The condition of unitarity of crossing transformations implies that $\mathrm{e}^{-\mathcal{K}}$ has to transform as a crossing symmetric 2d CFT partition function. It also has to be a continuous function of $b$ and $h_i$. In the bosonic case, it was argued in \cite{Collier:2023fwi} that the only candidate theory for $\mathrm{e}^{-\mathcal{K}}$ that meets these criteria is timelike Liouville theory.

The supersymmetric case is analogous. In this case, $\mathrm{e}^{-\mathcal{K}}$ has to carry central charge $\hat{c}=15-c$ and conformal weights $\hat{h}_i=\frac{1}{2}+\frac{1}{8}\delta_{a,0}-h_i$. Here, $15=26-11$ is the ghost central charge from the $\mathfrak{b}\mathfrak{c}$ and $\beta\gamma$ ghosts of string theory, while $\frac{1}{2}+\frac{1}{8}\delta_{a,0}$ appears in the mass-shell condition. The superconformal ghosts provide the ground state energy $-1+(\frac{1}{2}-\frac{1}{8}\delta_{a,0})$. Indeed, to construct an unintegrated bosonic vertex operator, we multiply the worldsheet CFT vertex operator by $\mathfrak{c} \mathrm{e}^{-\varphi}$ in the NS-sector and by $\mathfrak{c} \mathrm{e}^{-\frac{1}{2} \varphi}$ in the R-sector (suppressing right-movers). Here $\varphi$ is the bosonized superconformal ghost. We have $h(\mathfrak{c})=-1$ and $h(\mathrm{e}^{-\varphi})=\frac{1}{2}$ and $h(\mathrm{e}^{-\frac{1}{2}\varphi})=\frac{3}{8}$.
In view of the parametrization \eqref{eq:c h parametrization} this corresponds to $\hat{b}=i b$ and $\hat{P}=i P$ in both the NS- and R-case. 
The integral over super Teichm\"uller space is locally the same as in superstring theory.
In \cite{Rangamani:2025wfa,Muhlmann:2025ngz} it was shown that $\cN=1$ timelike Liouville theory is the unique solution of the analytic bootstrap with $\hat{c} \le \frac{3}{2}$ when one assumes a diagonal scalar spectrum without degeneracy and analytic dependence of the CFT data on the external parameters. The three-point functions are completely determined in these references. 
These observations motivate the following inner product between two wavefunctions $\Psi$ and $\Psi'$,
\be 
\langle \Psi \mid \Psi' \rangle := \int_{\fT_{g,\bsa}} \d \mu_{\mathfrak{b}\mathfrak{c},\beta\gamma} \, \Psi^* \Psi' \, \langle \hat{V}_{\hat{P}_1}^{a_1} \cdots \hat{V}_{\hat{P}_n}^{a_n} \rangle_g\ . \label{eq:inner product}
\ee
Here, the measure $\d \mu_{\mathfrak{b}\mathfrak{c},\beta\gamma}$ is the superstring measure obtained from the $\mathfrak{b}\mathfrak{c}$ and $\beta\gamma$ ghost correlation functions. We write $\hat{V}_{\hat{P}_i}^{a_i}$ for the $\cN=1$ timelike Liouville vertex operators in the sector labelled by $a_i$.

\paragraph{Orthogonal basis and spacelike super Liouville.} We will now argue that superconformal blocks with heavy internal weights form a complete basis with respect to this inner product. Let us first analyze normalizability. For concreteness, consider the degeneration limit where two points collide on the surface. Let $q$ be a local plumbing parameter describing the degeneration. The local behaviour of the integrand is
\be 
\frac{\big|q^{-1-1+\frac{1}{24}(c-\frac{3}{2}\delta_{a,1})+\frac{1}{2}P^2+\frac{1}{24}(\hat{c}-\frac{3}{2}\delta_{a,1})+(\frac{1}{2}-\frac{1}{8} \delta_{a,0})}\big|^2}{|\log |q||^{1/2}} = \frac{|q^{-1+\frac{1}{2}P^2}|^2}{|\log |q||^{1/2}}\ .
\ee
Here we used that integrated vertex operators have conformal weight $(1,1)$. The internal weight appearing in the OPE is $\frac{1}{24}(c-\frac{3}{2}\delta_{a,1})+\frac{1}{2}P^2$ for the conformal blocks, while the timelike Liouville contribution is bounded from below by $\frac{1}{24}(\hat{c}-\frac{3}{2}\delta_{a,1})$. For small plumbing parameter, the integral over the internal Liouville momenta can be performed via saddle point approximation at $P=0$. The Gaussian integral produces the factor $|\log |q||^{-1/2}$ as in the bosonic case, see \cite{Ribault:2015sxa}.
The contribution $\frac{1}{2}-\frac{1}{8} \delta_{a,0}$ are the contributions of the $\beta\gamma$ ghosts to the internal conformal weight as discussed above. Thus, the integral over the vicinity of $q=0$ converges provided that $P \in \RR$. This indicates precisely that the Hilbert space contains the superconformal blocks appearing in super Liouville theory. 

In fact, we claim that superconformal blocks form a $\delta$-function normalizable orthogonal basis to this Hilbert space. We adapt an argument from the bosonic construction \cite{Collier:2023fwi}. Let $\{|e_i \rangle\}_{i\ge 1}$ be a countable orthonormal basis for this Hilbert space. Then we can form the combination $B(\mathfrak{m},\tilde{\mathfrak{m}})= \sum_{i\ge 1} |e_i(\mathfrak{m}) \rangle | e_i(\tilde{\mathfrak{m}})\rangle^*$. Here $\mathfrak{m} \in \fT_{g,\bsa}$ stands collectively for all supermoduli. $B$ itself behaves as a CFT correlation function of central charge $c$ and conformal weights $h_i$, where $\mathfrak{m}$ corresponds to left-moving moduli and $\tilde{\mathfrak{m}}$ to right-moving moduli. Crossing symmetry follows because of unitarity of crossing transformations. $B$ is known as the Bergman kernel of the Hilbert space and existence as well as uniqueness can also be established by invoking the Riesz representation theorem. The main point is again that there is a unique such object --- namely spacelike $\cN=1$ Liouville theory \cite{Rashkov:1996np, Poghossian:1996agj, Belavin:2007gz}. In other words, $B=\langle V_{P_1}^{a_1} \cdots V_{P_n}^{a_n} \rangle_g$ must be the correlation function of $\cN=1$ Liouville theory. From this, orthonormality follows immediately because the spectrum of $\cN=1$ Liouville theory is diagonal. 

\paragraph{Super Virasoro TQFT.} At this point, we have defined a Hilbert space $\cH_{g,\bP,\bsa}^{(b)}$ associated to the data of $g$, $\bP$ and $\bsa$. Crossing transformations act unitarily on this Hilbert space via a set of crossing matrices. They are explicitly known for $\cN=1$ superconformal blocks \cite{Hadasz:2007wi, Hadasz:2013bwa, Apresyan:2023vxl} and generalize the bosonic Ponsot-Teschner fusion kernel \cite{Ponsot:1999uf, Ponsot:2000mt}. They form a \emph{projective} unitary representation. This provides the data to define a 3d TQFT, which we call super Virasoro TQFT. It gives an all-loop definition of 3d $\cN=1$ supergravity in parallel to the bosonic construction developed in \cite{Collier:2023fwi}. We will not develop this theory further here.

\subsection{Super Virasoro minimal string}
\paragraph{The mapping class group.} In 3d gravity, one has to gauge the mapping class group $\Gamma=\Map_{g,n}$, i.e.\ the set of large diffeomorphisms. The mapping class group in the supersymmetric setting is unchanged, since the fermionic directions of super Teichm\"uller space are contractible and therefore do not support large diffeomorphisms. The mapping class group acts non-trivially on the phase space and needs to be gauged. One can perform this gauging either before or after quantization. 
In the approach where one gauges first, one reduces the phase space to the quotient space $\fM_{g,\bsa} = \fT_{g,\bsa}/\Map_{g,n}$. This is the moduli space of super Riemann surfaces. It has finite (super) volume and thus one expects a finite dimensional Hilbert space.
However, there is a subtlety. On $\fT_{g,\bsa}$, wavefunctions are sections of a holomorphic line bundle $\fL_{\bP,\bsa}^{(b)}$. 
Unless the projective phases cancel (which imposes quantization conditions on the central charge and the conformal weights), the line bundle does not descend to a proper holomorphic line bundle on $\fM_{g,\bsa}$. 

\paragraph{The dimension.} Nonetheless, it is possible to associate a measure of the dimension of the gauged Hilbert space. The right framework for this is Atiyah's analytic $L^2$-index as well as subsequent generalizations \cite{Atiyah:1976vna, Moscovici:1982fv, Mathai:2006fai}. The basic definition of a reasonable dimension for the space of sections is
\begin{align}
\dim_\Gamma \cH_{g,\bP,\bsa}^{(b)}&=\int_{\fM_{g,\bsa}}\d \mu_{\mathfrak{b}\mathfrak{c},\beta\gamma}(\mathfrak{m})\, B(\mathfrak{m},\mathfrak{m})\, \langle \hat{V}_{\hat{P}_1}^{a_1} \cdots \hat{V}_{\hat{P}_n}^{a_n} \rangle(\mathfrak{m}) \\
&=\int_{\fM_{g,\bsa}}\d \mu_{\mathfrak{b}\mathfrak{c},\beta\gamma}\, \langle V_{P_1}^{a_1} \cdots V_{P_n}^{a_n} \rangle \, \langle \hat{V}_{\hat{P}_1}^{a_1} \cdots \hat{V}_{\hat{P}_n}^{a_n} \rangle\ . \label{eq:super Virasoro dimension}
\end{align}
Here, $B(\mathfrak{m},\mathfrak{m})$ is the Bergman kernel defined above evaluated on the diagonal. As explained above, it equals the $\cN=1$ super Liouville correlation function. The resulting integral looks exactly like a worldsheet integral of type 0 superstring theory. This defines the amplitudes of the super Virasoro minimal string \cite{Collier:2023cyw, Muhlmann:2025ngz, Rangamani:2025wfa}.

Intuitively, \eqref{eq:super Virasoro dimension} provides a reasonable definition of the dimension of the Hilbert space after gauging. If we naively use the definition $B(\mathfrak{m},\tilde{\mathfrak{m}})= \sum_{i\ge 1} |e_i(\mathfrak{m}) \rangle | e_i(\tilde{\mathfrak{m}})\rangle^*$ and recall that $\fM_{g,\bsa} \cong \fT_{g,\bsa}/\Gamma$, we obtain
\be 
\dim_\Gamma \cH_{g,\bP,\bsa}^{(b)} \sim \frac{1}{|\Gamma|} \sum_{i \ge 1} \langle e_i \mid e_i \rangle =\frac{1}{|\Gamma|} \dim \cH_{g,\bP,\bsa}^{(b)}\ , \label{eq:dim Gamma}
\ee
as expected for such a gauging (even though the ratio of the factors on the RHS is of course ill-defined). Thus, we will take \eqref{eq:super Virasoro dimension} to be the natural definition of the Hilbert space dimension after gauging.

\paragraph{Interpretation as an index.} Conceptually, computing dimensions of spaces of sections of holomorphic line bundles is hard. It is much simpler to consider an index. The Hilbert space consists of holomorphic sections of a line bundle. A simpler quantity is the Euler characteristic, given by the alternating sum 
\be 
\chi_\Gamma(\fT_{g,\bsa},\fL_{\bP,\bsa}^{(b)}) = \sum_{k\ge 0} (-1)^k \dim_\Gamma H^k(\fT_{g,\bsa},\fL_{\bP,\bsa}^{(b)})\ , \label{eq:Euler characteristic}
\ee
since it can be computed with the help of index theorems and therefore does not jump under small deformations of the external data. In fact, we shall now argue that $\dim_\Gamma H^k(\fT_{g,\bsa},\fL_{\bP,\bsa}^{(b)})=0$ for $k \ge 1$ and therefore passing to the index doesn't actually change the dimension.

First notice that since $\cN=1$ timelike Liouville theory can be defined for $\hat{c} \le \frac{3}{2}$, \eqref{eq:super Virasoro dimension} is defined for $c \ge \frac{27}{2}$, i.e.\ $b \in \RR$. 
In K\"ahler quantization, the line bundle $\fL_{\bP,\bsa}^{(b)}$ is \emph{positive} since its first Chern class is represented by the symplectic form $\omega$ and $\omega(\bullet,J \bullet)$ is the K\"ahler metric and positive. Kodaira's vanishing theorem tells us that $(\fL_{\bP,\bsa}^{(b)})^{\otimes n}$ for high enough power of $n$ has the desired vanishing property. However, taking the $n$-th tensor power of the line bundle just corresponds to rescaling the conformal weight and the central charge by a factor of $n$, see also \eqref{eq:c_1 L formula}. Therefore, vanishing is true for large enough $b$ and $P_i$. However, both the index and the worldsheet integral are analytic functions in $b$ and $P_i$. Therefore, agreement of the worldsheet integral with the index follows for all $P_i \in \RR$ and $b \in \RR$.

This argument requires some care because $\fT_{g,\bsa}$ is non-compact and Kodaira's vanishing theorem only holds for compact manifolds. One possible way to get around this is by considering the quotient $\fM_{g,\bsa}=\fT_{g,\bsa}/\Gamma$. It is not compact, but admits a natural compactification $\overline{\fM}_{g,\bsa}$ (the Deligne-Mumford compactification). $\smash{\fL_{\bP,\bsa}^{(b)}}$ does not necessarily descend to a line bundle on $\fM_{g,\bsa}$, since this requires certain quantization conditions. This does not modify the argument and we can momentarily assume this to be the case. We can then extend $\smash{\fL_{\bP,\bsa}^{(b)}}$ and the tangent bundle to the Deligne-Mumford compactification $\smash{\bfM_{g,\bsa}}$ with line bundle $\smash{\overline{\fL}_{\bP,\bsa}^{(b)}}$ and apply Kodaira vanishing on the compact space directly. For the identification of $\smash{H^0(\bfM_{g,\bsa},\overline{\fL}_{\bP,\bsa}^{(b)})}$ with $\dim_\Gamma \cH$, we note that sections on $\bfM_{g,\bsa}$ are automatically $L^2$ on the interior (since the Weil-Petersson volume is finite), while conversely, the plumbing analysis above shows that $L^2$ holomorphic sections are bounded near the compactification divisor and therefore extend holomorphically across it by the Riemann extension theorem. Ampleness of the extended line bundle on the compactification follows because $\smash{c_1(\fL_{\bP,\bsa}^{(b)})}$ is a positive linear combination of $\kappa_1$ and the $\psi$-classes (see \eqref{eq:c_1 L formula} below), which generate an ample class on $\smash{\overline{\cM}_{g,n}}$ for large enough coefficients \cite{Cornalba:1988ens}.
Therefore
\be 
\dim_\Gamma \cH_{g,\bP,\bsa}^{(b)}=\chi_\Gamma(\fT_{g,\bsa},\fL_{\bP,\bsa}^{(b)})=\int_{\fM_{g,\bsa}}\d \mu_{\mathfrak{b}\mathfrak{c},\beta\gamma}\, \langle V_{P_1}^{a_1} \cdots V_{P_n}^{a_n} \rangle \, \langle \hat{V}_{\hat{P}_1}^{a_1} \cdots \hat{V}_{\hat{P}_n}^{a_n} \rangle\ . \label{eq:index delbar worldsheet integral}
\ee
\paragraph{The index theorem.} Since both the line bundle and the tangent bundle extend to the compactification, the Hirzebruch-Riemann-Roch theorem on the compact space $\bfM_{g,\bsa}$ gives the standard form of the index theorem,
\be
\dim_\Gamma \chi_\Gamma(\fT_{g,\bsa},\fL_{\bP,\bsa}^{(b)})=\int_{\bfM_{g,\bsa}} \std(\fM_{g,\bsa})\, \mathrm{e}^{c_1(\overline{\fL}_{\bP,\bsa}^{(b)})}\ . \label{eq:naive index theorem}
\ee
Here, $\std(\fM_{g,\bsa})$ denotes the super Todd class of the tangent bundle of super moduli space. For now this is a place holder. We discuss further below what this exactly means. $\smash{c_1(\overline{\fL}_{\bP,\bsa}^{(b)})}$ is the first Chern class of the line bundle $\smash{\overline{\fL}_{\bP,\bsa}^{(b)}}$. Since $\smash{\overline{\fL}_{\bP,\bsa}^{(b)}}$ is not necessarily a proper line bundle on $\smash{\overline{\fM}_{g,
\bsa}}$, $\smash{c_1(\overline{\fL}_{\bP,\bsa}^{(b)})}$ does not need to take values in integer cohomology.

Let us comment on some additional subtleties. As mentioned above, the line bundle on $\fT_{g,\bsa}$ does not necessarily lead to a proper line bundle after gauging because the action of crossing transformations on $\fL_{g,\bsa}$ is only projective. The projective case of Atiyah's index theorem was worked out in \cite{Mathai:2006fai}. Orbifold singularities are not a problem because one can pass to a finite cover $\widetilde{\fM}_{g,\bsa}$ that is free from orbifold singularities \cite{Boggi:2000aa}. The fact that we are working on supermoduli space is also not a problem. As we shall discuss below, one can reduce everything to the bosonic moduli space of spin curves. Thus, the right hand side of \eqref{eq:naive index theorem} should be viewed as a stand in for an expression on the bosonic submanifold that we discuss below. Essentially, the fermionic directions only contribute a finite amount of freedom, do not carry topology and do not modify the theorem substantially.

\subsection{Discrete symmetries} 
The discussion so far suggests that the amplitudes of the super Virasoro minimal string can be evaluated through intersection theory. As was discussed in \cite{Muhlmann:2025ngz, Rangamani:2025wfa}, there are actually four different Virasoro minimal strings, that we will denote by 0A$^+$, 0A$^-$, 0B$^+$ and 0B$^-$. 

\paragraph{0A vs.\ 0B theory and the Arf invariant.} We begin by recalling the difference of type 0A and 0B string theory. The super moduli space $\fM_{g,\boldsymbol{1}}$ (i.e.\ only NS-punctures) has \emph{two} connected components, while the super moduli space with at least two Ramond punctures only has one connected component. To understand why, it is sufficient to consider the moduli space of spin curves which is the bosonic subspace of $\fM_{g,\bsa}$. A (generalized) spin structure is a choice of line bundle $S$ such that $S^2 \cong \omega(-D)$, where $D=\sum_i (a_i-1) [z_i]$ is the divisor of Ramond punctures, meaning that a spinor has a square root branch cut at an R-puncture and no branch cut at an NS-puncture. Here, $\omega$ is the canonical line bundle on the surface whose sections are holomorphic 1-forms. For a spin structure with only NS-punctures, $\zeta=\dim H^0(S) \bmod 2$ is the mod-2 index and is a topological invariant \cite{Atiyah:1971rm}, also called the Arf invariant. Therefore, $(-1)^\zeta$ defines a topological field theory in two dimensions. It is not possible to define $(-1)^\zeta$ for a surface with Ramond punctures and consequently $\fM_{g,\bsa}$ in their presence does not decompose into two components.

The inclusion of the additional sign $(-1)^\zeta$ leads to the type 0A theory. Since this is only possible when no Ramond punctures are present, the type 0A theory only has NS-sector vertex operators. This can also be seen more directly from the worldsheet as a consequence of the sum over spin structures that project out the Ramond sector vertex operators for the type 0A theory. The type 0B theory does not include this sign.

\paragraph{$(-1)^\text{F}$.} There is another discrete sign one can introduce for both the type 0A and 0B theory. From the point of view of 3d gravity, it is obvious: since the theory has both bosonic and fermionic states, we can compute the dimension of the Hilbert space with or without the inclusion of $(-1)^\text{F}$. In the former, we are computing the superdimension while in the latter we are computing the dimension. It is also simple to introduce this sign on the intersection side. There are actually two different notions of super Todd class one can define which we denote by $\std_\pm(\fM_{g,\bsa})$. They are precisely defined below, see eqs.~\eqref{eq:tdB ch+ F combination} and \eqref{eq:tdB ch- F combination}.

On the worldsheet side, we need to weigh superconformal blocks with an odd number of fermions by a minus sign. Recall that the spacelike Liouville factor in \eqref{eq:index delbar worldsheet integral} arises from the Bergman kernel. It constitutes a particular linear combination of left- and right-moving conformal blocks, and we can insert $(-1)^\text{F}$ by flipping the sign of all left-moving fermions.\footnote{Since left- and right-moving blocks are paired diagonally, we can equivalently flip the sign of all right-moving fermions.} This is equivalent to flipping the sign of the left-moving supercharge $G^\text{sL}$ of spacelike Liouville. In other words, insertion of the factor $(-1)^\text{F}$ implies that a \emph{different} worldsheet $\cN=(1,1)$ super Virasoro algebra is gauged, namely $G=-G^\text{sL}+G^\text{tL}$ and $\tilde{G}=\tilde{G}^\text{sL}+\tilde{G}^\text{tL}$. The appearance of this sign possibility was noticed from the worldsheet in \cite{Muhlmann:2025ngz, Rangamani:2025wfa}. 

\paragraph{Summary.} To summarize the discussion, we have derived that
\be 
\int_{\fM_{g,\bsa}}\!\!\!\d \mu_{\mathfrak{b}\mathfrak{c},\beta\gamma}\, \langle V_{P_1}^{a_1} \cdots V_{P_n}^{a_n} \rangle \, \langle \hat{V}_{\hat{P}_1}^{a_1} \cdots \hat{V}_{\hat{P}_n}^{a_n} \rangle=\int_{\bfM_{g,\bsa}} \!\!\!\std_\pm(\fM_{g,\bsa}) \mathrm{e}^{c_1(\fL_{\bP,\bsa}^{(b)})} \begin{cases}
    (-1)^\zeta & \text{0A}\ , \\
    1 & \text{0B}\ .
\end{cases}  \label{eq:main identification}
\ee
In the type 0A case, we necessarily have only NS-punctures, i.e.\ $\bsa=\boldsymbol{1}$.

\section{Intersection theory and topological recursion} \label{sec:intersection theory}
Our main task will now be to evaluate the right hand side further. We will show that it can be reduced to intersection theory on the ordinary moduli space $\bM_{g,n}$. This will allow us to relate it to topological recursion.
\subsection{Reducing to the moduli space of spin curves}

\paragraph{Split supermanifolds.} Let us recall the notion of a split supermanifold. Let $M$ be a bosonic manifold and $F$ a vector bundle over it. Then one can reverse the parity of the fibers of $F$ and obtain a supermanifold $\Pi F$. On $\Pi F$, the bosonic coordinates $x^1,\dots,x^{n_\text{b}}$ parametrize $M$, while the directions along the fibers constitute the fermionic coordinates $\theta^1,\dots,\theta^{n_\text{f}}$. A supermanifold of this form is said to be split. This structure is not generic, since the coordinate transition functions act separately on the bosonic coordinates and are linear in the fermionic coordinates. For more general supermanifolds, the transition functions of the bosonic coordinates can involve any superfunction, such as fermion bilinears.

For real smooth supermanifolds, splitness is the generic situation: Batchelor's theorem \cite{Batchelor:1979tams} states that any such supermanifold $\fM$ with bosonic subspace $M$ is isomorphic to $\Pi F$ for some choice of vector bundle $F$ (although not canonically so). The analogous statement fails in the complex-analytic category \cite{Witten:2012bg, Witten:2012ga}. In particular, $\fM_{g,\bsa}$ is generically not split as a complex supermanifold \cite{Donagi:2013dua}. In superstring perturbation theory, this fact creates substantial additional complications. However, we will now show that for the purposes of computing the index, one can replace $\fM_{g,\bsa}$ by its split model $\fM_{g,\bsa}^\text{split}$.

\paragraph{Reduction to the split model.} The bosonic submanifold of $\fM_{g,\bsa}$ is $\cM_{g,\bsa}^\text{spin}$, embedded as the locus where all fermionic coordinates are set to zero. We can similarly truncate the coordinate transition functions to first order in $\theta^i$, which gives the split model $\fM_{g,\bsa}^\text{split}=\Pi \cF$, where $\cF$ is the bundle of fermionic tangent directions.\footnote{For bosonic moduli spaces and quantities on them, we use a mathcal font.} As already mentioned above, $\fM_{g,\bsa}$ is in general not isomorphic to $\fM_{g,\bsa}^\text{split}$. We now explain that index theory is blind to this fact, and for the purpose of computing the index we may replace $\fM_{g,\bsa}$ by $\fM_{g,\bsa}^\text{split}$.

Let $F(\boldsymbol{z} \mid \boldsymbol{\theta})$ be a holomorphic section of $\fL_{\bP,\bsa}^{(b)}$. It admits an expansion into the fermionic coordinates of the form
\be 
 F(\boldsymbol{z}\mid\boldsymbol{\theta}) = f_0(\boldsymbol{z}) + f_i(\boldsymbol{z})\theta^i + f_{ij}(\boldsymbol{z})\theta^i\theta^j + \cdots\ . \label{eq:F theta expansion}
\ee
On $\fM_{g,\bsa}$, the coordinate transition functions can depend on the fermionic coordinates, which mixes the different components of this expansion (preserving their parity). The individual coefficients $f_{i_1 \cdots i_k}$ are therefore not globally well-defined. What is globally well-defined is the statement that $F(\boldsymbol{z} \mid \boldsymbol{\theta})$ vanishes to order $k$ in the fermionic coordinates, meaning that $f_0, f_i, \ldots, f_{i_1 \cdots i_{k-1}}$ all vanish and the first non-vanishing coefficient is $f_{i_1 \cdots i_k}$. This is because transition functions can only increase the $\theta$-degree, never decrease it. It follows that there is a filtration
 \be
 \fL_{\bP,\bsa}^{(b)}\supset\mathcal{J}\fL_{\bP,\bsa}^{(b)}\supset\mathcal{J}^2\fL_{\bP,\bsa}^{(b)}\supset\cdots\supset\mathcal{J}^{n_\text{f}+1}\fL_{\bP,\bsa}^{(b)}=0\ ,
 \ee
where $\mathcal{J}$ is the ideal sheaf of holomorphic functions on $\fM_{g,\bsa}$ that vanish on $\cM_{g,\bsa}^\text{spin}$, $\mathcal{J}^k$ is its $k$-th power (functions vanishing to $k$-th order in the fermionic coordinates), and $\mathcal{J}^k \fL_{\bP,\bsa}^{(b)}$ is the subsheaf of holomorphic sections of $\fL_{\bP,\bsa}^{(b)}$ vanishing to $k$-th order.

 For the successive quotients, we have the isomorphism
 \be 
\mathcal{J}^{k}\fL_{\bP,\bsa}^{(b)}/\mathcal{J}^{k+1}\fL_{\bP,\bsa}^{(b)} \cong \mathcal{L}_{\bP,\bsa}^{(b)} \otimes \textstyle\bigwedge^{\! k} \cF^*\ ,
 \ee 
where $\smash{\mathcal{L}_{\bP,\bsa}^{(b)} \cong \fL_{\bP,\bsa}^{(b)}\big|_{\cM_{g,\bsa}^\text{spin}}}$ and $\bigwedge^{\! k}$ denotes the $k$-th exterior product. Indeed, on $\fM_{g,\bsa}^\text{split}$ the coefficients $f_{i_1 \cdots i_k}$ would be sections of $\smash{\mathcal{L}_{\bP,\bsa}^{(b)} \otimes \textstyle\bigwedge^{\! k} \cF^*}$. On $\fM_{g,\bsa}$ this is not true, but the quotient by $\mathcal{J}^{k+1}\fL_{\bP,\bsa}^{(b)}$ removes precisely the mixing of $f_{i_1 \cdots i_k}$ with the other coefficient functions, since those are set to zero. This means that there is a short exact sequence (of sheaves)
 \be
0\longrightarrow\mathcal{J}^{k+1}\fL_{\bP,\bsa}^{(b)}\longrightarrow\mathcal{J}^k\fL_{\bP,\bsa}^{(b)}\longrightarrow\mathcal{L}_{\bP,\bsa}^{(b)}\otimes{\textstyle\bigwedge^{\! k}}\cF^*\;\longrightarrow\;0\ ,\label{eq:filtration SES}
 \ee
The Euler characteristic is additive over exact sequences, i.e.
\be 
\chi\big(\fM_{g,\bsa},\mathcal{J}^k\fL_{\bP,\bsa}^{(b)}\big)-\chi\big(\fM_{g,\bsa},\mathcal{J}^{k+1}\fL_{\bP,\bsa}^{(b)}\big)=\chi\big(\cM_{g,\bsa}^\text{spin},\mathcal{L}_{\bP,\bsa}^{(b)}\otimes{\textstyle\bigwedge^{\!k}}\cF^*\big)\ .
\ee
If one takes the sum over $k$, the left hand side telescopes. Since $\mathcal{J}^{n_\text{f}+1}=0$, it follows that
 \be
\chi\big(\fM_{g,\bsa},\fL_{\bP,\bsa}^{(b)}\big)=\sum_{k=0}^{n_\text{f}}\chi\big(\cM_{g,\bsa}^\text{spin},\mathcal{L}_{\bP,\bsa}^{(b)}\otimes{\textstyle\bigwedge^{\!k}}\cF^*\big)=\chi\big(\fM_{g,\bsa}^\text{split},\fL_{\bP,\bsa}^{(b)}\big)\ .\label{eq:chi equality}
 \ee
Thus, for the purposes of computing the index, we may replace $\fM_{g,\bsa}$ by its split model $\fM_{g,\bsa}^\text{split}$.

\paragraph{Reduction of the super Todd class.} After passing to the split model, we can reduce the right hand side of \eqref{eq:main identification} to the bosonic subspace $\cM_{g,\bsa}^\text{spin}$. For the Chern class of the line bundle $\fL_{\bP,\bsa}^{(b)}$ this is simple. It represents the conformal anomaly of a supersymmetric conformal field theory on a spin surface. This anomaly is just the usual conformal anomaly when we restrict it to $\cM_{g,\bsa}^\text{spin}$, which leads to the line bundle $\mathcal{L}_{\bP,\bsa}^{(b)}$ that we already used above. Similarly, the Arf invariant is purely topological. The only non-trivial part is the reduction of the super Todd class.

To make sense of the super Todd class appearing in \eqref{eq:main identification}, we work with the middle expression of \eqref{eq:chi equality}. We can apply the ordinary index theorem on $\cM_{g,\bsa}^\text{spin}$ with the vector bundle $\mathcal{L}_{\bP,\bsa}^{(b)} \otimes \textstyle \bigwedge^{\! \bullet} \cF^*$, where $\bigwedge^{\! \bullet} \cF^* = \bigoplus_{k \ge 0} \bigwedge^{\! k} \cF^*$. It features the Chern character, which is defined as $\ch(E)=\sum_i \mathrm{e}^{c_1(L_i)}$ for $E=\bigoplus_i L_i$ a sum of line bundles.\footnote{By the splitting principle, it suffices to verify identities for characteristic classes on direct sums of line bundles.}

In other words, $\std_\pm (\fM_{g,\bsa})$ reduces to $\td(\cM_{g,\bsa}^\text{spin}) \, \ch_\pm( {\bigwedge}^\bullet \cF^*)$ on the spin moduli space, where the subscript $\pm$ corresponds to the optional insertion of $(-1)^\text{F}$. The bundle $\bigwedge^\bullet \cF^*$ is naturally $\ZZ_2$-graded as $\bigwedge^{\!\text{even}} \cF^* \oplus \bigwedge^{\!\text{odd}} \cF^*$, whose summands contain the bosonic and fermionic sections respectively. The passage to the split model explained above works separately on each summand, so the $\ZZ_2$-grading remains well-defined. We thus define
\be 
\ch_\pm(\textstyle\bigwedge^{\!\bullet}\cF^*) := \ch(\textstyle\bigwedge^{\!\text{even}} \cF^*) \pm \ch(\textstyle\bigwedge^{\!\text{odd}} \cF^*) \ ,
\ee
so that $\ch_+$ is the ordinary Chern character, while $\ch_-$ implements the inclusion of $(-1)^\text{F}$. 

To summarize, we can rewrite the right hand side of \eqref{eq:main identification} as
\be 
\int_{\bM_{g,\bsa}^\text{spin}} \td(\cM_{g,\bsa}^\text{spin})\, \ch_\pm(\textstyle\bigwedge^{\!\bullet}\cF^*) \, \mathrm{e}^{c_1(\mathcal{L}_{\bP,\bsa}^{(b)})}\begin{cases}
    (-1)^\zeta & \text{0A}\ , \\
    1 & \text{0B}\ .
\end{cases} \label{eq:index theorem spin moduli space}
\ee
This is also the form of the index theorem that appears in the mathematical literature \cite{Voronov:1990aa}.
\subsection{Reducing to standard intersection numbers}
The next step in our chain of logic is to compute the characteristic classes entering \eqref{eq:index theorem spin moduli space} more explicitly in terms of the standard $\kappa$ and $\psi$ classes. This can be done straightforwardly from standard results.

\paragraph{Pushforward vector bundles.} Both the bosonic and fermionic tangent directions $\cB=T \cM_{g,\bsa}^\text{spin}$ and $\cF$ are natural vector bundles over the spin moduli space. In fact, the cotangent space $\cB^*$ is spanned by quadratic differentials with possible simple poles at the puncture, i.e.\ $\cB_\Sigma^* \cong H^0(\Sigma,L_\text{b})$ with $L_\text{b} \cong \omega^2 (\sum_i [z_i]) \cong \omega_\text{log}^2(-\sum_i [z_i])$. In order to make contact with mathematical literature, we will express these bundles in the following through the logarithmic canonical bundle $\omega_\text{log} \cong \omega(-\sum_i [z_i])$. Notice that $H^1(\Sigma,L_\text{b})$ vanishes. The dimension of the tangent space can be computed by the Riemann-Roch theorem, which gives $3g-3+n$ as expected. Similarly, $\cF^*$ is spanned by $\frac{3}{2}$-differentials with possible simple poles at NS-punctures and inverse square root singularities at R-punctures. In other words, $\cF^* \cong H^0(\Sigma,L_\text{f})$ with $L_\text{f}^2 \cong \omega^3(\sum_i (1+a_i)[z_i]) \cong \omega_\text{log}^3(-\sum_i(2-a_i) [z_i])$. On $\cM_{g,\bsa}^\text{spin}$, the line bundle $L_\text{f}$ is by definition well-defined, since $\cM_{g,\bsa}^\text{spin}$ parametrizes punctured surfaces together with generalized spin structures. The degree of $L_\text{f}$ equals $\deg L_\text{f} = \frac{1}{2}(3(2g-2)+n+|\bsa|)$. Therefore, by Riemann-Roch, $H^0(\Sigma,L_\text{f})$ has dimension $2g-2+\frac{1}{2}(n+|\bsa|)$, which is indeed the expected fermionic dimension of super moduli space.

Both of these vector bundles have the form of a pushforward as follows. Consider the universal spin curve $\mathcal{C}_{g,\bsa}^\text{spin}$. This is the total space of the bundle, where the fiber at a point in $\cM_{g,\bsa}^\text{spin}$ is the spin surface itself parametrized by that point. Let 
\be 
\pi: \mathcal{C}_{g,\bsa}^\text{spin} \longrightarrow \cM_{g,\bsa}^\text{spin}
\ee 
be the projection that forgets the point on the surface in the fiber. 
We can think of $L_\text{b}$ and $L_\text{f}$ as living on the universal curve since it is a line bundle on the surface itself over the whole family of spin surfaces parametrized by $\cM_{g,\bsa}^\text{spin}$.
Then by definition
\be 
\cB^* \cong H^0(\Sigma,L_{\text{b}})=\pi_* L_{\text{b}}\ , \qquad \cF^* \cong H^0(\Sigma,L_{\text{f}})=\pi_* L_{\text{f}}
\ee
is the pushforward of the line bundles $L_\text{b}$ and $L_\text{f}$ to moduli space.\footnote{In general, one has to correct by higher cohomology groups and obtains a derived sheaf, but these vanish in the present case.}
In such situations, one can compute characteristic classes of $\pi_* L_{\text{b/f}}$ by the Grothendieck-Riemann-Roch theorem, applied to the map $\pi$. This computation was carried out by Chiodo in great generality \cite{Chiodo:2008}.

\paragraph{Chiodo formula.} Chiodo's formula precisely computes the Chern character of pushforwards of vector bundles. This can be done on moduli spaces of $r$-spin structures. In our case, ordinary spin structures correspond to $r=2$. Consider a line bundle $L$ such that $L^2 \cong \omega_\text{log}^s (-\sum_{i=1}^n m_i [z_i])$. Then 
\begin{multline} 
\ch_d(\pi_* L)=\frac{1}{(d+1)!} \bigg(B_{d+1}(\tfrac{s}{2})\kappa_d-\sum_{i=1}^n B_{d+1}(\tfrac{m_i}{2}) \psi_i^d \\ 
+\sum_{q=0,1} B_{d+1}(\tfrac{q}{2}) (\xi_q)_*\Big(\frac{\psi_\bullet^{d}-(-\psi_\circ)^d}{\psi_\bullet+\psi_\circ}\Big)\bigg)\ . \label{eq:Chiodo formula}
\end{multline}
Let us explain the notation. $B_n(x)$ denotes Bernoulli polynomials. $\kappa_d$ and $\psi_i$ are the basic cohomology classes on the moduli space of spin curves. The definition is analogous to their corresponding definition on the moduli space of curves. Lastly, $\xi$ is the inclusion of the union of all boundary divisors in $\bM_{g,\bsa}^\text{spin}$. The boundary divisors furthermore fall into two distinct sets depending on whether the nodal points are NS-punctures ($q=1$) or R-punctures ($q=0$). This leads to the two embedding maps $\xi_q$. $(\xi_q)_*$ includes the relevant automorphism factors of boundary divisors.
Finally, $\psi_\bullet$ and $\psi_\circ$ are the two $\psi$-classes associated to the two nodes. 
Notice in particular that as a consequence of $\kappa_0=2g-2+n$
\be 
\ch_0(\pi_* L)=\frac{1}{2}(s-1)(2g-2+n)-\frac{1}{2}\sum_{i=1}^n (m_i-1)
\ee
recovers the expected dimension of the bundle as computed by Riemann-Roch.

Let us apply \eqref{eq:Chiodo formula} to the two cases of interest. It will be sufficient for most cases to only consider $\ch_1(\pi_* L_\text{b/f})$ and $\ch_{2k}(\pi_* L_\text{b/f})$. For these cases dramatic simplifications occur. For $L_\text{b}$, we use $s=4$ and $m_i=2$. For $L_\text{f}$, we use $s=3$ and $m_i=2-a_i$. We obtain
\begin{subequations}
\begin{align} 
\ch_1(\cB^*)&=\frac{13}{12}\kappa_1-\frac{1}{12}\sum_i \psi_i+\frac{1}{12} \delta_0-\frac{1}{24} \delta_1\ , \label{eq:ch1 B}\\
\ch_1(\cF^*)&=\frac{11}{24}\kappa_1+\sum_i\Big(\frac{1}{24}-\frac{1}{8} \delta_{a_i,0}\Big) \psi_i+\frac{1}{12} \delta_0-\frac{1}{24} \delta_1\ , \label{eq:ch1 F}\\
\ch_{2k}(\cB^*)&=\frac{\kappa_{2k}}{(2k)!}\ , \\
\ch_{2k}(\cF^*)&=\frac{\kappa_{2k}}{2^{2k}(2k)!}\ .
\end{align} \label{eq:Chern character B and F}%
\end{subequations}
Here, we wrote $\delta_q=(\xi_q)_*(1)$ for the boundary class. 
Remarkably, all $\psi$-classes and boundary contributions to higher even Chern characters vanish because the corresponding Bernoulli polynomials vanish. We also used the identities $B_{2k+1}(2)=2k+1$ and $B_{2k+1}(\frac{3}{2})=(2k+1)2^{-2k}$, which follow from the generating function $\frac{t\, \mathrm{e}^{xt}}{\mathrm{e}^t-1}=\sum_{n \ge 0} B_n(x) \frac{t^n}{n!}$. For the dual bundle, we can use that $\ch_d(E^*)=(-1)^d \ch_d(E)$ for any vector bundle $E$.
\paragraph{Symmetric function exercises.} 
From the Chern character of $\cB$ and $\cF$, we can compute the Todd class and the character $\ch_\pm(\bigwedge^\bullet \cF^*)$. Let us start with the former. For this one can use the splitting principle. Assume momentarily that $\cB=\bigoplus_i \mathcal{L}_i$. Let $x_i=c_1(\mathcal{L}_i)$. Then $\ch_d(\cB)=\frac{1}{d!} \sum_i x_i^d$. The Todd class is defined as
\be 
\td(\cB)=\prod_i \frac{x_i}{1-\mathrm{e}^{-x_i}}=\exp\bigg(-\sum_{d \ge 1} \sum_{i} \frac{B_d x_i^d}{d\, d!}\bigg)=\exp\bigg(-\sum_{d \ge 1} \frac{B_d}{d} \, \ch_d(\cB) \bigg)\ . \label{eq:td B}
\ee
Except for $B_1=-\frac{1}{2}$, odd Bernoulli numbers vanish and thus this class indeed only contains the Chern characters computed above. We obtain
\be 
\td(\cB)=\exp\bigg(-\frac{13}{24}\kappa_1+\frac{1}{24}\sum_i \psi_i-\frac{1}{24} \delta_0+\frac{1}{48} \delta_1-\sum_{k \ge 1} \frac{B_{2k}\kappa_{2k}}{2k (2k)!}\bigg)\ .
\ee
Let us turn to the fermionic class. We begin with $\ch_+(\bigwedge^\bullet\cF^*)$. Assume again momentarily that $\cF=\bigoplus_i \mathcal{L}_i$ and $x_i=c_1(\mathcal{L}_i)$. Then
\begin{align} 
\ch_+({\bigwedge}^\bullet \cF^*)&=\prod_i (1+\mathrm{e}^{-x_i}) \\
&=2^{\rk \cF} \exp\bigg(\sum_{d\ge 1}\sum_i \frac{B_d(2^d-1)x_i^d}{d\, d!}\bigg) \\
&=2^{\rk \cF}\exp\bigg(\sum_{d\ge 1} \frac{B_d(2^d-1)}{d}\, \ch_d(\cF)\bigg)\\
&=2^{\rk \cF} \exp\bigg(\frac{11}{48} \kappa_1+\sum_i \Big(\frac{1}{48}-\frac{1}{16} \delta_{a_i,0}\Big) \psi_i+\frac{1}{24} \delta_0-\frac{1}{48} \delta_1\nonumber\\
&\qquad\qquad\qquad\qquad\qquad\qquad\qquad\qquad+\sum_{k \ge 1} \frac{B_{2k}(1-2^{-2k}) \kappa_{2k}}{2k(2k)!}\bigg)\ . \label{eq:ch+ F}
\end{align}
Remarkably, once we combine \eqref{eq:td B} and \eqref{eq:ch+ F}, all boundary classes cancel out. There are also additional cancellations in the $\psi$-classes as well as in the kappa classes. We obtain
\be 
\td(\cB)\ch_+({\bigwedge}^\bullet \cF^*)=2^{\rk \cF} \exp\bigg(-\frac{15}{48}\kappa_1+\frac{1}{16}\sum_i \delta_{a_i,1}\psi_i-\sum_{k \ge 1} \frac{B_{2k} \kappa_{2k}}{2k(2k)!2^{2k}}\bigg)\ . \label{eq:tdB ch+ F combination}
\ee
Let us turn to $\ch_-(\bigwedge^\bullet \cF^*)=\prod_i (1-\mathrm{e}^{-x_i})$. This behaves differently because each term in the product does not have a constant piece. Therefore it is a cohomology class whose lowest degree piece is in dimension $\rk \cF$ and does not take the form of an exponential. We will instead factor out the piece $\prod_i x_i$, which equals the top Chern class or Euler class $e(\cF)$ of $\cF$. We will see below that one can deal with it using known results. Therefore, we compute
\begin{align}
    \ch_-({\bigwedge}^\bullet \cF^*)&=e(\cF) \exp\bigg(\sum_{d \ge 1} \sum_i \frac{B_d x_i^d}{d\, d!} \bigg)\\
    &=e(\cF)\exp\bigg(\sum_{d \ge 1} \frac{B_d}{d} \ch_d(\cF)\bigg)\\
    &=e(\cF)\exp\bigg(\frac{11}{48} \kappa_1+\sum_i \bigg(\frac{1}{48}-\frac{1}{16} \delta_{a_i,0}\bigg) \psi_i+\frac{1}{24} \delta_0-\frac{1}{48} \delta_1\nonumber\\
    &\qquad\qquad\qquad\qquad\qquad\qquad\qquad\qquad\qquad+\sum_{k \ge 1} \frac{B_{2k}\kappa_{2k}}{2k(2k)!2^{2k}} \bigg)\ . \label{eq:ch- F}
\end{align}
When combining the bosonic and the fermionic piece, we observe a similar cancellation as in the case above. We obtain
\be 
\td(\cB)\ch_-({\bigwedge}^\bullet \cF^*)=e(\cF) \exp\bigg(-\frac{15}{48}\kappa_1+\frac{1}{16}\sum_i \delta_{a_i,1}\psi_i-\sum_{k \ge 1} \frac{B_{2k} (1-2^{-2k})\kappa_{2k}}{2k(2k)!}\bigg)\ . \label{eq:tdB ch- F combination}
\ee
The cancellation of boundary classes is consistent with the expectation that the index theorem \eqref{eq:naive index theorem} receives no boundary corrections in the supersymmetric case. Let us remark that this cancellation does not occur in the bosonic case. In \cite{Collier:2023cyw} boundary classes were discarded but this procedure was not well-justified. We plan to return to this issue in more generality in the future \cite{VMS_universality}.
\paragraph{The line bundle $\mathcal{L}_{\bP,\bsa}^{(b)}$.} The final ingredient needed to evaluate the RHS is an explicit expression for $c_1(\mathcal{L}_{\bP,\bsa}^{(b)})$. $c_1(\mathcal{L}_{\bP,\bsa}^{(b)})$ captures the conformal anomaly. As already mentioned above, the conformal anomaly of a superconformal field theory is just the conformal anomaly of the underlying CFT. Therefore, the formula is unchanged from the bosonic case \cite{Belavin:1986cy, Collier:2023cyw}. It reads
\be 
c_1(\mathcal{L}_{\bP,\bsa}^{(b)})=\frac{c}{24} \kappa_1+\sum_i \Big(h_i-\frac{c}{24}\Big) \psi_i\ . \label{eq:c_1 L formula}
\ee 
The terms proportional to $c$ capture the conformal anomaly, while $h_i \psi_i$ carries the weight of the $i$-th vertex operator under conformal transformations. 

We see that the coefficients of the $\psi$ and $\kappa$ classes appearing in \eqref{eq:ch1 B} and \eqref{eq:ch1 F} are no accident. They precisely account for the contribution of the ghosts to the central charge and the ground state energy, respectively.

With the standard parametrizations \eqref{eq:c h parametrization}, we end up with the following formulas,
\begin{multline}
\td(\cB)\ch_\sigma ({\bigwedge}^\bullet \cF^*)\mathrm{e}^{c_1(\mathcal{L}_{\bP,\bsa}^{(b)})}= \exp\bigg(\frac{b^2+b^{-2}}{8}\kappa_1+\frac{1}{2}\sum_i P_i^2\psi_i\bigg) \\
\times \begin{cases}
    2^{\rk \cF}\exp\Big(-\sum_{k \ge 1} \frac{B_{2k}\kappa_{2k}}{2k(2k)!2^{2k}}\Big)\ , &\sigma=+\ , \\
    e(\cF)\exp\Big(-\sum_{k \ge 1} \frac{B_{2k} (1-2^{-2k})\kappa_{2k}}{2k(2k)!}\Big)\ , &\sigma=-\ .
\end{cases} \label{eq:summary integrand moduli space of spin curves}
\end{multline}
Remarkably, this expression barely distinguishes between NS-sector and R-sector vertex operators. They still enter via $2^{\rk \cF}$ and $e(\cF)$, but are otherwise indistinguishable. 
\paragraph{Pushforward to ordinary moduli space.} So far, \eqref{eq:summary integrand moduli space of spin curves} is still a class on the moduli space of spin curves. Consider the map
\be 
p: \bM_{g,\bsa}^\text{spin} \longrightarrow \bM_{g,n}
\ee
that forgets the spin structure. Summing over spin structures means that we are computing the pushforward of the cohomology class \eqref{eq:summary integrand moduli space of spin curves}, possibly with the insertion of the Arf invariant $(-1)^\zeta$. This is actually quite simple. $p$ is a $2^{2g}$-fold covering. It is unramified over the interior of moduli space. Since \eqref{eq:summary integrand moduli space of spin curves} does not involve boundary classes (except in $e(\cF)$), one can essentially just multiply the expression with the degree. For the case of 0B$^+$, this leads to the additional factor of $2^{2g-1}$. The extra factor of $\frac{1}{2}$ originates from the orbifold structure of $\bM_{g,\bsa}^\text{spin}$. Each spin surface carries an extra $\ZZ_2$ automorphism group acting on the line bundle. This factor of $\frac{1}{2}$ may also be viewed more physically as originating as the result of gauging the $\ZZ_2$ that leads to the sum over spin structures. For type 0A$^+$, the factor is instead $\frac{1}{2}(\frac{1}{2}(2^{2g}+2^g)-\frac{1}{2}(2^{2g}-2^g))=2^{g-1}$, where we use that there are $\frac{1}{2}(2^{2g}+2^g)$ even and $\frac{1}{2}(2^{2g}-2^g)$ odd spin structures. 

For the 0A/B$^-$ theories, we have to account for the Euler class $e(\cF)$. We can use the projection formula. Since the remaining part of the integrand just uses $\kappa$ and $\psi$-classes, it has the form of a pullback $p^*\alpha$, where $\alpha$ is the same expression of $\kappa$ and $\psi$-classes, but on the ordinary moduli space of curves. Then the projection formula gives
\be 
p_*(e(\cF) p^* \alpha)=(p_* e(\cF))\alpha\ .
\ee
Following Norbury \cite{Norbury:2017eih}, we define the Theta class as $\Tgn=(-1)^n 2^{g-1+n}\, p_* e(\cF)$. The normalization ensures simple factorization properties. It was first defined in \cite{Norbury:2017eih} in the case of $\bsa=\boldsymbol{1}$ and is part of a larger family of generalized Theta classes known as Chiodo classes. The generalizations for other choices of $\bsa$ were considered in \cite{Chidambaram:2022cqc, Alexandrov:2024kuj}.

For type 0A$^-$, we need to compute the pushforward signed by the Arf invariant. In this case, we necessarily have $\bsa=\boldsymbol{1}$. We define correspondingly the Arf theta class as
\be
\Tgn^\zeta=(-1)^n 2^{g-1+n}\big(p^+_* e(\cF)-p^-_* e(\cF)\big)\ ,
\ee
where $p^\pm:\bM_{g,\boldsymbol{1}}^{\text{spin},\pm} \longrightarrow \bM_{g,n}$ is the forgetful map from the component of even and odd Arf invariant. It is a special case of a spin Chiodo class as defined in \cite{Giacchetto:2021qss}.

\subsection{The type \texorpdfstring{0A/B$^+$}{0A/B+} theories}
Let us begin by discussing the theories without $(-1)^\text{F}$ insertions. Their treatment is much simpler.
\paragraph{Duality between bosonic VMS, 0A$^+$ and 0B$^+$ super VMS.} We derived the following intersection number expression on ordinary moduli space,
\begin{align}
\mathsf{V}_{g,\bsa}^{(b),\text{0A/B}^+}(\bP)&=\begin{cases} 2^{3g-3+n}\ , &\text{0A}^+\\
2^{4g-3+\frac{1}{2} (|\bsa|+n)}\ , &\text{0B}^+
\end{cases} \nonumber\\
&\qquad\times \int_{\bM_{g,n}}\exp\bigg(\frac{b^2+b^{-2}}{8}\kappa_1+\frac{1}{2}\sum_i P_i^2\psi_i-\sum_{k \ge 1} \frac{B_{2k}\kappa_{2k}}{2k(2k)!2^{2k}}\bigg) \\
&=\mathsf{V}_{g,n}^{(b)}(\bP)\begin{cases} 1\ , &\text{0A}^+\\
2^{g+\frac{1}{2} (|\bsa|-n)}\ , &\text{0B}^+
\end{cases}\ .
\end{align}
In passing to the second line, we used that the integrand is related to the bosonic one by uniform rescaling of the classes according to their weight, where
\be 
\mathsf{V}_{g,n}(\bP)=\int_{\bM_{g,n}}\exp\bigg(\frac{b^2+b^{-2}}{4}\kappa_1+\sum_i P_i^2\psi_i-\sum_{k \ge 1} \frac{B_{2k}\kappa_{2k}}{2k(2k)!}\bigg)\ .
\ee
Since the integral picks out the top form, this leads to an additional relative factor of $2^{-\dim_\CC \bM_{g,n}}$.

Thus we conclude that type 0A$^+$ super VMS is perturbatively equivalent to bosonic VMS. For type 0B$^+$, this is also true with small modifications. We see that R-sector vertex operators produce equivalent string amplitudes to NS-sector vertex operators up to a factor of $\sqrt{2}$, i.e.\ the worldsheet vertex operator $\sqrt{2} \mathcal{O}^\text{R}_P$ has the same amplitudes as $\mathcal{O}^\text{NS}_P$. This equivalence of vertex operators is very surprising from a worldsheet point of view. The model has worldsheet supersymmetry, but no notion of target space supersymmetry from a string theory point of view. Thus these vertex operators do not sit in the same multiplet and there is a priori not any obvious relation between the two. 
Moreover, the overall ratio $\mathsf{V}^{\text{0B}^+}_{g,n}/\mathsf{V}^{\text{0A}^+}_{g,n}=2^g$ reflects the ratio of the total number of spin structures $2^{2g}$ to the signed count $\sum_\text{spin}(-1)^\zeta=2^g$ on a genus-$g$ surface \cite{Stanford:2019vob}.
Up to these normalization factors, all three theories are therefore perturbatively equivalent.
\paragraph{The dual matrix integral.} In \cite{Collier:2023cyw}, a duality between the VMS and a dual matrix integral was established. We will restrict ourselves to the formulation of the loop equations of the matrix integral in terms of topological recursion. The topological recursion starts from the basic data
\be 
\omega_{0,1}(z)=4 \sqrt{2} \pi \sin(2\pi b z) \sin(2\pi b^{-1} z) \d z\ , \qquad \omega_{0,2}(z_1,z_2)=\frac{\d z_1\, \d z_2}{(z_1-z_2)^2}\ . \label{eq:bosonic spectral curve}
\ee
Higher differentials can be computed from the recursion
\begin{multline} 
\omega_{g,n}(z_1,\boldsymbol{z})=\Res_{z=0} K(z_1,z) \bigg(\omega_{g-1,n+1}(z,-z,\boldsymbol{z}) \\ +\sum_{h=0}^g \sideset{}{'}\sum_{I \subset \boldsymbol{z}} \omega_{h,|I|+1}(z,I) \omega_{g-h,|I^c|+1} (-z,I^c)\bigg)\ . \label{eq:topological recursion}
\end{multline}
Here $\boldsymbol{z}=\{z_2,\dots,z_n\}$. The prime on the sum indicates that we leave out all terms that involve $\omega_{0,1}$, i.e.\ $(h,I)=(0,\emptyset)$ and $(h,I)=(g,\boldsymbol{z})$.
The recursion kernel is given by
\be 
K(z_1,z)=-\frac{1}{4 \omega_{0,1}(z)}\int_{-z}^z \omega_{0,2}(z_1,-)=-\frac{z}{8\sqrt{2}\pi (z_1^2-z^2)\sin(2\pi b z)\sin(2\pi b^{-1}z)}\ . \label{eq:recursion kernel}
\ee
The relation to the worldsheet observables is given by
\be 
\omega_{g,n}(\boldsymbol{z})=\int_0^\infty \prod_j (-4 \sqrt{2}\pi P_j \, \d P_j\, \mathrm{e}^{-4\pi z_j P_j}) \mathsf{V}_{g,n}^{(b)}(\bP) \prod_{j=1}^n \d z_j\ .
\ee
\paragraph{The disk partition function.} $\omega_{0,1}(z)$ is related to the density of eigenvalues in the matrix integral. It was observed in \cite{Collier:2023cyw} that it equals the Cardy density of states of a 2d CFT of central charge $c$. The Cardy density of states is the modular S-kernel of the vacuum Virasoro character. For the Virasoro algebra, $\chi_P(\tau)=q^{P^2} \eta(\tau)^{-1}$ and $\chi_\text{vac}(\tau)=q^{-\frac{c-1}{24}}(1-q) \eta(\tau)^{-1}=\chi_{\frac{i}{2}(b+b^{-1})}(\tau)-\chi_{\frac{i}{2}(b-b^{-1})}(\tau)$. We then have
\be 
\chi_\text{vac}(-\tfrac{1}{\tau})=\int_0^\infty \d P \, 2\sqrt{2} \sinh(2\pi b P) \sinh(2\pi b^{-1} P) \chi_P(\tau)\ .
\ee
For the supersymmetric case, there are three different non-vanishing vacuum characters. We have
\begin{subequations} 
\begin{align}
\chi_P^{\text{NS}^+}(\tau)&=q^{\frac{1}{2}P^2}\vartheta_3(\tau)^{\frac{1}{2}}\eta(\tau)^{-\frac{3}{2}}\ , & \chi_\text{vac}^{\text{NS}^+}(\tau)&=\chi_{\frac{i}{2}(b+b^{-1})}^{\text{NS}^+}(\tau)-\chi_{\frac{i}{2}(b-b^{-1})}^{\text{NS}^+}(\tau)\ , \\
\chi_P^{\text{NS}^-}(\tau)&=q^{\frac{1}{2}P^2}\vartheta_4(\tau)^{\frac{1}{2}}\eta(\tau)^{-\frac{3}{2}}\ , & \chi_\text{vac}^{\text{NS}^-}(\tau)&=\chi_{\frac{i}{2}(b+b^{-1})}^{\text{NS}^-}(\tau)+\chi_{\frac{i}{2}(b-b^{-1})}^{\text{NS}^-}(\tau)\ , \\
\chi_P^{\text{R}^+}(\tau)&=2^{\frac{1}{2}} q^{\frac{1}{2}P^2}\vartheta_2(\tau)^{\frac{1}{2}}\eta(\tau)^{-\frac{3}{2}}\ , & \chi_\text{vac}^{\text{R}^+}(\tau)&=\frac{1}{2}\chi_{0}^{\text{R}^+}(\tau)\ .
\end{align}
\end{subequations}
The Ramond sector vacuum character does not contribute to the Cardy density. For the other two characters, we obtain
\begin{subequations}
\begin{align}
    \chi_\text{vac}^{\text{NS}^+}(-\tfrac{1}{\tau}) &= \int_0^\infty  \d P\, 4 \sinh(\pi b P) \sinh(\pi b^{-1} P) \chi_P^{\text{NS}^+}(\tau)\ , \label{eq:Neveu Schwarz Cardy density of states}\\
    \chi_\text{vac}^{\text{NS}^-}(-\tfrac{1}{\tau}) &= \int_0^\infty  \d P\, 2 \sqrt{2} \cosh(\pi b P) \cosh(\pi b^{-1} P) \chi_P^{\text{R}^+}(\tau)\ . \label{eq:Ramond Cardy density of states}
\end{align}
\end{subequations}
Thus, we have two universal Cardy densities of states in superconformal field theories. The type 0A/B$^+$ spectral curve is that of the bosonic VMS \eqref{eq:VMS density of states}, which is related to the NS-sector Cardy density \eqref{eq:Neveu Schwarz Cardy density of states} by a rescaling $z \to 2z$ and an overall factor of $\sqrt{2}$. The Ramond-sector density of states will be relevant for the 0A/B$^-$ theories.
\subsection{Type \texorpdfstring{0A$^-$}{0A-}}
Let us next discuss the type 0A$^-$ theory, which features the Arf theta class $\Tgn^\zeta$. It is an example of a spin Chiodo class as defined in \cite{Giacchetto:2021qss}. It can be computed explicitly in terms of standard classes on $\bM_{g,n}$.

Already at genus 0, all amplitudes involving $\Theta_{0,n}^\zeta$ vanish for dimensional reasons when $n \ge 3$: the Arf theta class has cohomological degree $n-2$, which exceeds $\dim_\CC \bM_{0,n}=n-3$, leaving no room for $\psi$-class insertions. The higher-genus vanishing is more subtle. We demonstrate in appendix~\ref{app:arf theta class} that any intersection number of the form
\be
\int_{\bM_{g,n}} \Tgn^\zeta \, \prod_{i=1}^n \psi_i^{d_i} \prod_{d \ge 1} \kappa_d^{m_d}
\ee
\emph{vanishes identically} for all $g \ge 1$. From the point of view of intersection theory, this cancellation is non-trivial.

It follows that the only non-vanishing perturbative amplitude of the type 0A$^-$ super VMS is the sphere 2-point function. In the language of string field theory, the sphere 2-point function defines the propagator (the kinetic term of the action), while higher-point sphere amplitudes define interaction vertices and higher-genus amplitudes give loop corrections. Since all of these vanish, the perturbative string field theory action is purely quadratic, i.e.\ the type 0A$^-$ theory is a free string field theory. It would be interesting to understand whether this theory receives non-perturbative corrections. From the matrix model perspective, vanishing of all non-trivial perturbative amplitudes indicates that this model may be dual to a matrix model without spectral edge, but non-perturbative data is required to determine which one.
\subsection{Type \texorpdfstring{0B$^-$}{0B-} with NS-sector vertex operators}
We now move on to the next theory, which will differ from bosonic VMS. We will start with type 0B$^-$ with only NS-sector vertex operators.
\paragraph{Theta class.} Let us recall that $\Tgn=(-1)^n 2^{g-1+n} p_* e(\cF)$. Thus, we have
\begin{multline} 
\mathsf{V}^{(b),\text{0B}^-}_{g,n}(\bP)=(-1)^n 2^{1-g-n}\int_{\bM_{g,n}} \Tgn \exp\bigg(\frac{b^2+b^{-2}}{8}\kappa_1+\frac{1}{2}\sum_i P_i^2\psi_i\\
-\sum_{k \ge 1} \frac{B_{2k} (1-2^{-2k})\kappa_{2k}}{2k(2k)!}\bigg)\ . \label{eq:intersection number 0B- NS sector vertex operators}
\end{multline}
This behaves very differently from bosonic VMS. $\Tgn$ is a class of pure degree $2g-2+n$, so the super quantum volume is a polynomial of degree $g-1$ in $P_i^2$ and $c$ (as opposed to $3g-3+n$ in the bosonic case). In particular, it vanishes identically at genus 0, where the fermionic dimension of moduli space is bigger than the bosonic dimension.
\paragraph{Topological recursion.} Intersection numbers of the form \eqref{eq:intersection number 0B- NS sector vertex operators} are also captured by topological recursion. They are governed by the so-called Br\'ezin-Gross-Witten (BGW) $\tau$-function \cite{Brezin:1980rk, Gross:1980he}, rather than the Kontsevich-Witten $\tau$-function that governs the intersection numbers of $\psi$-classes \cite{Witten:1990hr}. This was conjectured by Norbury \cite{Norbury:2017eih} and proven in \cite{Chidambaram:2022cqc, Norbury:2020vyi}, building on the topological recursion for the Bessel curve \cite{Do:2016odu, Chekhov:2017aot}. The dictionary is reviewed in appendix~\ref{app:hard topological recursion}. The coefficients of the spectral curve are determined through the equality \eqref{eq:power series identification}. In our case, $s_1=\frac{b^2+b^{-2}}{8}$ and $s_{2k}=-\frac{B_{2k}(1-2^{-2k})}{2k(2k)!}$, so 
\begin{align} 
\sum_{k=0}^\infty y_k(2k-1)!! u^k&=\exp\bigg(-\frac{b^2+b^{-2}}{8} u+\sum_{k\ge 1} \frac{B_{2k}(1-2^{-2k}) u^{2k}}{2k(2k)!}\bigg)\\
&=\mathrm{e}^{-\frac{b^2+b^{-2}}{8} u} \cosh\Big(\frac{u}{4}\Big)\ .
\end{align}
Extracting coefficients gives 
\be 
y_k=\frac{(-1)^k}{2(2k)!}\bigg[\Big(\frac{b+b^{-1}}{2}\Big)^{2k}+\Big(\frac{b-b^{-1}}{2}\Big)^{2k}\bigg]\ ,
\ee
and therefore
\be 
\omega_{0,1}(z)=\sum_{k=0}^\infty y_k z^{2k} \d z=\cos\big(\tfrac{1}{2} b z\big) \cos\big(\tfrac{1}{2}b^{-1}z\big)\, \d z\ .
\ee
We can rescale the coordinates by a factor of $4\pi$ since topological recursion is coordinate independent. Moreover, rescaling $\omega_{0,1}$ by a factor $\lambda$ will rescale $\omega_{g,n}$ in topological recursion by $\lambda^{2-2g-n}$ \cite{Eynard:2007kz}. Thus we can absorb the prefactor in \eqref{eq:intersection number 0B- NS sector vertex operators} by multiplying $\omega_{0,1}$ by another factor $\sqrt{2}$. The remaining prefactor $(-1)^n 2^{-n/2}$ in \eqref{eq:intersection number 0B- NS sector vertex operators} can be absorbed into vertex operator normalizations. We obtain
\be 
\omega_{0,1}(z)=4 \sqrt{2} \pi \cos(2\pi b z) \cos(2\pi b^{-1}z)\, \d z\ .
\ee
This is indeed the natural analogue of \eqref{eq:bosonic spectral curve}. It is related to the Ramond Cardy density of states \eqref{eq:Ramond Cardy density of states} in the same way as the type 0A/B$^+$ theories are related to the Neveu-Schwarz density of states.
\subsection{Type \texorpdfstring{0B$^-$}{0B-} with mixed vertex operators}
Finally, it remains to discuss the case of type 0B$^{-}$ with both NS- and R-sector vertex operators. 

\paragraph{Worldsheet normalizations.} This case presents us with an interesting puzzle. Let us start by considering the tree-level 3-point function with two R-sector and one NS-sector vertex operator inserted. The corresponding supermoduli space is $0 \mid 0$-dimensional. Thus the index theorem gives the answer 
\be 
\mathsf{V}_{0,3}^{(b),(1,0,0),\text{0B}^-}(P_1,P_2,P_3)=1\ ,
\ee
reflecting that there is only one state. This is the same as for the type 0B$^+$ theory.

From the worldsheet point of view, the answer was computed in \cite{Muhlmann:2025ngz, Rangamani:2025wfa} and was found to be \emph{vanishing}. This creates an apparent puzzle. We now argue that this is because of a normalization choice employed on the worldsheet. We follow the conventions of \cite{Muhlmann:2025ngz}. The GSO projected and physical Ramond-sector vertex operator in the $(-\frac{1}{2},-\frac{1}{2})$ picture takes the form
\be 
\mathcal{V}_p^\text{R}=\mathsf{R}(p) \mathfrak{c} \tilde{\mathfrak{c}} \mathrm{e}^{-\frac{1}{2}(\varphi+\tilde{\varphi})} \big(V_p^{\text{R},+} \widehat{V}_{ip}^{\text{R},-}+i V_p^{\text{R},-} \widehat{V}_{ip}^{\text{R},+}\big)\ .
\ee
Here, $p=i P$.
The spacelike super Liouville Ramond-sector vertex operators are denoted by $V_p^{\text{R},\pm}$ and the timelike vertex operators by $\widehat{V}_p^{\text{R},\pm}$.
The superscripts $\pm$ denote the two degenerate Ramond sector ground states of the two worldsheet theories. We can compute the string theory two-point function of these vertex operators. For this, one has to work out also the vertex operator in the picture number $(-\frac{3}{2},-\frac{3}{2})$, which we can take to be of the form (up to BRST exact terms)
\be 
\mathsf{R}(p) \mathfrak{c} \partial \mathfrak{c}\tilde{\mathfrak{c}}\bar{\partial} \tilde{\mathfrak{c}} \mathrm{e}^{-\frac{3}{2}(\varphi+\tilde{\varphi})} \big(V_p^{\text{R},+} \widehat{V}_{ip}^{\text{R},-}+i V_p^{\text{R},-} \widehat{V}_{ip}^{\text{R},+}\big)\ .
\ee
The string theory two-point function then reduces to the CFT two-point function, which gives
\begin{multline}
    \langle \big(V_{p_1}^{\text{R},+} \widehat{V}_{ip_1}^{\text{R},-}+i V_{p_1}^{\text{R},-} \widehat{V}_{ip_1}^{\text{R},+}\big)\big(V_{p_2}^{\text{R},+} \widehat{V}_{ip_2}^{\text{R},-}+i V_{p_2}^{\text{R},-} \widehat{V}_{ip_2}^{\text{R},+}\big) \rangle_{\text{sL}+\text{tL}} \\
    =\langle V_{p_1}^{\text{R},+} V_{p_2}^{\text{R},+} \rangle\langle \widehat{V}_{i p_1}^{\text{R},-} \widehat{V}_{ip_2}^{\text{R},-} \rangle-\langle V_{p_1}^{\text{R},-} V_{p_2}^{\text{R},-} \rangle\langle \widehat{V}_{i p_1}^{\text{R},+} \widehat{V}_{ip_2}^{\text{R},+} \rangle=0\ ,
\end{multline}
because the two-point functions between the two terms cancel out. 
Since the string theory two-point function vanishes, it is therefore not surprising that also higher-point functions are observed to vanish identically. From intersection theory, we obtain non-zero answers because the intersection-theoretic computation does not include the worldsheet normalization factor $\mathsf{R}(p)$, which is proportional to the inverse of the vanishing R-sector two-point function coefficient.

\paragraph{Ramond gas.} We can ask then what the meaning of the intersection theory expressions with Ramond-sector insertions is, especially in view of the dual matrix integral. A hint is provided by the conjecture of \cite{Stanford:2019vob} that adding a \emph{gas} of Ramond-punctures corresponds to a simple deformation of the spectral curve. Importantly, this gives an interpretation only to the Ramond vertex operators at $P=0$. We can think of this deformation as turning on R-R flux in the worldsheet theory. We do not know of a natural incorporation of the other Ramond-sector vertex operators.

This intuition has been made more precise in a sequence of recent mathematical works 
\cite{Chidambaram:2022cqc, Norbury:2023swp, Alexandrov:2024kuj}. Let us denote by $\Tgn^{\bsa}$ the generalization of the Theta class with $\bsa=(a_1,\dots,a_n)$ and $a_i \in \{0,1\}$ labelling the puncture type. The standard Theta class is $\Tgn^{(1^n)}$. It is shown in \cite{Chidambaram:2022cqc, Alexandrov:2024kuj} that the intersection numbers
\be 
\omega_{g,n}(z_1,\dots,z_n)=\sum_{m=0}^\infty \frac{t^m}{m!} \int_{\bM_{g,n+m}} \Theta_{g,n+m}^{(1^n,0^m)} \prod_{i=1}^n \d \Big(\frac{1}{1-\frac{\psi_i}{t-z_i} \partial_{z_i}} \frac{1}{z_i-t} \Big)
\ee
are also produced by topological recursion of the spectral curve
\be
\omega_{0,1}(z)=\frac{z-t}{z}\, \d z\ , \qquad \omega_{0,2}(z_1,z_2)=\frac{\d z_1\, \d z_2}{(z_1-z_2)^2}
\ee
and the single branch point located at $z=t$. It is therefore natural to expect that the super VMS amplitudes with a gas of Ramond defects are also dual to a deformed spectral curve. The parameter $t$ shifts the hard edge of the tree-level resolvent. We will not attempt to construct this modified spectral curve.
\section{Discussion} \label{sec:discussion}
We conclude by discussing a few future directions and connections to other works. 
\emph{Worldsheet analysis.} We have studied the super Virasoro minimal string from the point of view of 3d gravity and have said little about actual worldsheet explorations. Building on \cite{Muhlmann:2025ngz, Rangamani:2025wfa}, this perspective will be further developed in a forthcoming paper \cite{Muhlmann_toappear}.

\emph{Super JT gravity.} In the $b \to 0$ limit, super VMS reduces to super JT gravity studied in \cite{Stanford:2019vob}. This is manifest from the 3d gravity perspective: $b \to 0$ corresponds to $\hbar \to 0$ in the quantization procedure and the number of quantum states becomes the volume of phase space, i.e.\ supersymmetric generalizations of the Weil-Petersson volume. For the $-$ theories, one obtains the super volume form and super JT gravity on moduli space, while for the $+$ theories, the fermionic directions only supply a factor of $2^{\rk \mathcal{F}}$ and the semiclassical limit should be understood in terms of bosonic JT gravity. Many of the statements made in this paper have an analogous statement in super JT gravity, although with reversed 0A/0B conventions, see footnote \ref{footnote:0A/0B convention}.

\emph{Super topological recursion.} Recently a framework of topological recursion on super surfaces was developed \cite{Bouchard:2020pji, Aghaei:2025lhy}. Since we reduced all expressions to ordinary moduli space, we do not make use of this formalism, but it could perhaps constitute an alternative route for supersymmetrizing the VMS dualities. 

\emph{Perturbative vanishing of 0A$^-$.} As mentioned above, all perturbative amplitudes (apart from the sphere two-point function) vanish identically (and non-trivially) in type 0A$^-$. This exact perturbative vanishing is a rare phenomenon in string theory. Another example with a related property is the $\cN=2$ string/self-dual gravity \cite{Ooguri:1990ww, Ooguri:1995cp}. It would be interesting to understand whether there is a general underlying mechanism guaranteeing such vanishing theorems.

\emph{Ramond gas for 0B$^-$.} Type 0B$^-$ features also Ramond punctures. In the worldsheet description, their amplitudes vanish identically. From intersection theory, it is however possible to define non-vanishing amplitudes with Ramond-punctures and we argued that the discrepancy can be traced back to an operator normalization in the worldsheet theory. The intersection expressions for amplitudes with Ramond insertions are not naturally described by the matrix integral. Instead it is more natural to consider a `gas' of such punctures, i.e.\ turn on an R-R background for the model. It is natural to conjecture that such a gas of Ramond punctures deforms the spectral curve, as is also conjectured for JT supergravity in \cite{Stanford:2019vob, Alexandrov:2024kuj}.

\emph{Super complex Liouville string.} One can similarly wonder about supersymmetric generalizations of the complex Liouville string. There are also four theories 0A$^+$, 0B$^+$, 0A$^-$ and 0B$^-$ in that case. Based on a worldsheet bootstrap, \cite{Du:2025lya} made a proposal for a (two-) matrix integral dual which bears striking resemblance to the super VMS case discussed in this paper. It would be interesting to understand those dualities also from a more conceptual viewpoint.

\emph{$\cN=2$ super VMS.} One can try to further generalize to $\cN=2$ worldsheet supersymmetry. This is technically difficult from the worldsheet since spacelike and timelike $\cN=2$ theories are much less understood than their $\cN=0$ and $\cN=1$ counterparts. Nonetheless, it may be possible to still establish a duality by studying the moduli space of $\cN=2$ super Riemann surfaces. The classical limit is expected to reduce to $\cN=2$ JT supergravity, which was studied in \cite{Turiaci:2023jfa}, where the dual spectral curve was identified.

\emph{Heterotic VMS.} Another generalization is a variant inspired by \cite{Davis:2005qe, Seiberg:2005nk}. Its worldsheet theory is given by coupling timelike $\cN=1$ Liouville theory to spacelike $\cN=1$ Liouville theory with 22 left-moving fermions, right-moving $\beta\gamma$ ghosts and the usual $\mathfrak{b}\mathfrak{c}$ ghosts. Since this worldsheet theory is not left-right symmetric, the relation to 3d gravity does not simply generalize. Thus it constitutes an interesting generalization pushing the boundaries of the minimal string/matrix integral paradigm.

\section*{Acknowledgements} We thank Rafael \'Alvarez-Garc\'ia, Scott Collier, Duarte Fragoso, Alessandro Giacchetto, Mukund Rangamani, Bob Knighton, Beatrix M\"uhlmann, Victor Rodriguez and Ioannis Tsiares for useful conversations. We thank Scott Collier, Bob Knighton, Mukund Rangamani, Victor Rodriguez for comments on an early draft. LE is supported by the European Research Council (ERC) under the European Union’s Horizon 2020 research and innovation programme (grant agreement No 101115511). This paper made use of `Get Physics Done' for adversarial review \cite{physical_superintelligence_2026_gpd}. All computations and writing were carried out by LE.

\appendix 
\section{Topological recursion with hard edges} \label{app:hard topological recursion}
The topological recursion encountered for the 0B$^-$ case involves the Theta class. Both appendices are mathematically rigorous, which is why we switch to a mathematical writing style. Let us state the main proposition. 
\begin{proposition} \label{prop:hard topological recursion}
Consider a spectral curve of the form $\Sigma=\CP^1$ with $\omega_{0,1}(z)=\sum_{k=0}^\infty y_{k} z^{2k} \d z $ with $y_0=1$ and $\omega_{0,2}(z_1,z_2)=\frac{\d z_1\, \d z_2}{(z_1-z_2)^2}$. Define $\omega_{g,n}$ as usual, see \eqref{eq:topological recursion}. Then
\be
\omega_{g,n}(\boldsymbol{z})=\int_{\bM_{g,n}} \Tgn \mathrm{e}^{\sum_j s_j \kappa_j}\prod_{i=1}^n \sum_{d\ge 0} \frac{(2d+1)!! \psi_i^d\, \d z_i}{z_i^{2d+2}}\ .
\ee
Here, $s_j$ are defined through the equality of the following power series
\be 
\exp\bigg(-\sum_{j=1}^\infty s_j u^j \bigg)=\sum_{k=0}^\infty y_{k} (2k-1)!! u^k \label{eq:power series identification}
\ee
with $(-1)!!=1$.
\end{proposition}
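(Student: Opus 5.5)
The plan is to reduce the statement to the known base case, the Bessel curve, and then account for the deformation of $\omega_{0,1}$ by $\kappa$-classes using the standard deformation theory of topological recursion.

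\textbf{Base case.} When $y_k=\delta_{k,0}$ we have $\omega_{0,1}=\d z$, which is the Bessel curve $x=\frac{1}{2}z^2$, $y=\frac{1}{z}$ written in the normalization of \eqref{eq:topological recursion}. The claim then reduces to
\[
\omega_{g,n}=\int_{\bM_{g,n}}\Tgn\prod_i\sum_{d}(2d+1)!!\,\psi_i^d\,z_i^{-2d-2}\,\d z_i .
\]
This is Norbury's theorem, proven in \cite{Chidambaram:2022cqc, Norbury:2020vyi}, that $\Tgn$-descendants give the BGW $\tau$-function, combined with the identification of BGW with topological recursion on the Bessel curve \cite{Do:2016odu, Chekhov:2017aot}. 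I will take this as input. The only thing left to check is normalizations: the factor $-\frac{1}{4}$ in the kernel \eqref{eq:recursion kernel} and the factor $(2d+1)!!$, which I will fix by comparing $\omega_{1,1}=\frac{1}{8}\frac{\d z}{z^2}$, using $\int\Theta_{1,1}=\frac{1}{8}$.

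\textbf{Deformation by $\kappa$-classes.} Next I invoke the general fact, due to Eynard and extended to the Theta-class setting by Chidambaram, Garcia-Failde and Giacchetto \cite{Chidambaram:2022cqc}, that a cohomological field theory of the form $\Omega\,\mathrm{e}^{\sum_j s_j\kappa_j}$ is obtained from $\Omega$ by the translation part of Givental's action. On the topological recursion side, this translation corresponds to changing only $\omega_{0,1}$ by odd-degree-shifted terms near the branch point, while $\omega_{0,2}$ is left alone.

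Concretely, write the local expansion $\omega_{0,1}=\sum_k y_k z^{2k}\,\d z$. The recursion depends only on the odd part of $\omega_{0,1}$ at $z=0$, which is all of it here. The Laplace-transform formula of Eynard expresses the translation as
\[
\exp\Big(-\sum_j s_j u^j\Big)=\frac{\int \mathrm{e}^{-x/u}\,\omega_{0,1}}{\int \mathrm{e}^{-x/u}\,\d z}
\]
with $x=\frac{1}{2}z^2$. Up to the precise normalization of the formal Laplace transform, the monomial $z^{2k}\,\d z$ is mapped to $(2k-1)!!\,u^k$. This reproduces exactly the identification \eqref{eq:power series identification}.

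\textbf{Main obstacle.} The hard step is the second one: justifying that Eynard's $\kappa$-translation theorem applies verbatim to the Bessel (hard-edge) curve, where $\omega_{0,1}$ does not vanish at the branch point. The standard statement is formulated for Airy-type curves with $y\sim z$.

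I would handle this by proving it directly. Take the graph sum, or equivalently the Givental R-matrix/translation decomposition, which \cite{Chidambaram:2022cqc} establishes for the Bessel curve with general even $\omega_{0,1}$. Then observe that the R-matrix is trivial, because $\omega_{0,2}$ is the standard Bergman kernel on $\CP^1$ with a single branch point. So only the translation survives, and translation is equivalent to inserting $\mathrm{e}^{\sum s_j\kappa_j}$ via the usual pushforward identity $\pi_*\psi_{n+1}^{j+1}=\kappa_j$ along forgetful maps. This identity is compatible with $\Tgn$ because $\Theta$ pulls back as $\Theta_{g,n+1}=\psi_{n+1}\pi^*\Tgn$.

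That last pullback property is the key input. It converts the dilaton-shift terms into $\kappa$-classes, with the $(2k-1)!!$ factors arising from $\int\psi^{k}$-type Laplace moments. Finally I would check that the sign convention, the minus sign in the exponent, matches by comparing $\omega_{1,1}$ at first order in $y_1$.
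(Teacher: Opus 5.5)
Your outline is correct and has the same skeleton as the paper's proof. You take the Bessel case as input, reinterpret the higher $y_k$ as $\psi$-insertions at extra forgotten points, and trade these for $\mathrm{e}^{\sum_j s_j\kappa_j}$ using $\Theta_{g,n+1}=\psi_{n+1}\pi^*\Theta_{g,n}$. That last conversion is exactly \cite[Theorem 5.7]{Norbury:2020vyi}, which the paper cites rather than re-derives, with $\sigma_k=-(2k-1)!!\,y_k$.

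Where you differ is the middle step. You single it out as the main obstacle and want to extract it from an Eynard/Givental translation theorem for irregular curves, with trivial $R$-matrix and the translation read off from a Laplace transform. The paper needs none of this. \cite[Proposition 4.2]{Chekhov:2017aot} already expresses the correlators of the deformed hard-edge curve through the Bessel coefficients $b_{g,n+m}$, $0\le m\le g-1$, with the extra arguments weighted by $\frac{(-1)^m}{m!}\prod_k \frac{y_{\alpha_k}}{2\alpha_k+1}$. Since $b_{g,n}$ contains the factors $(2d_i+1)!!$, these weights become insertions of $-(2\alpha-1)!!\,y_\alpha\psi^{\alpha}$, and matching with Norbury's formula ends the proof. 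So the step you regard as hard is an existing theorem, and the double factorials are plain bookkeeping.

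Your route is more conceptual. Your Gaussian moments $\frac{\int_0^\infty \mathrm{e}^{-z^2/2u}z^{2k}\,\d z}{\int_0^\infty \mathrm{e}^{-z^2/2u}\,\d z}=(2k-1)!!\,u^k$ reproduce \eqref{eq:power series identification} exactly, with no leftover normalization ambiguity, and the framework would survive a nontrivial $R$-matrix. But it rests on an irregular DOSS-type theorem whose precise statement you would still have to pin down. The paper's route is just two citations plus a coefficient match.

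One concrete slip: the sign check you propose, comparing $\omega_{1,1}$ at first order in $y_1$, tests nothing. With $\omega_{0,1}=(1+y_1z^2+\cdots)\,\d z$ one finds $\omega_{1,1}(z_1)=\Res_{z=0}\frac{\d z_1\,\d z}{8z(z_1^2-z^2)(1+y_1z^2+\cdots)}=\frac{\d z_1}{8z_1^2}$ for every $y_1$. On the intersection side, $\Theta_{1,n}$ already has top degree on $\bM_{1,n}$, so no $s_j$ can appear at genus one. The first place the sign $s_1=-y_1$ becomes visible is genus two, e.g.\ the term $s_1\int_{\bM_{2,1}}\Theta_{2,1}\kappa_1\,\frac{\d z_1}{z_1^2}$ in $\omega_{2,1}$.
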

This is very similar to the analogous theorem for the case $y_0=0$, see \cite{Eynard:2011ga}.

Let us first give some intuition about this. Imposing that $\omega_{0,1}(z)$ starts with a constant term implies a hard edge, i.e.\ inverse square root singularity in the density of states of the matrix integral, as opposed to the more conventional square root singularity. This modification makes the recursion kernel \eqref{eq:recursion kernel} nonsingular. As a consequence all $\omega_{0,n}$ with $n \ge 3$ vanish identically. However, once one computes $\omega_{1,1}$, the handle pinching term in the topological recursion \eqref{eq:topological recursion} is non-zero and higher genus data is non-vanishing.

\begin{proof}
    As far as we are aware, Proposition~\ref{prop:hard topological recursion} has not been stated in this form in the literature, but it is a simple consequence of two known facts. \cite[Proposition 4.2]{Chekhov:2017aot} establishes
    \begin{align}
        \omega_{g,n}(z_1,\dots,z_n)=\sum_{m=0}^{g-1} \sum_{\boldsymbol{d} \ge 0,\,\boldsymbol{\alpha}\ge 1} \prod_{i=1}^n \frac{\d z_i}{z_i^{2d_i+2}} \frac{(-1)^m}{m!}\, b_{g,n+m}(\boldsymbol{d},\boldsymbol{\alpha}) \prod_{k=1}^m \frac{y_{\alpha_{k}}}{2\alpha_k+1}\ , \label{eq:omega Bessel curve coefficients}
    \end{align}
where $b_{g,n}$ appears as the coefficients of the Bessel curve for which $\omega_{0,1}(z)=\d z$,
\be 
\omega_{g,n}^\text{Bes}(\boldsymbol{z})=\sum_{\boldsymbol{d} \ge 0} b_{g,n}(\boldsymbol{d}) \prod_{i=1}^n \frac{\d z_i}{z_i^{2d_i+2}}\ .
\ee
The Bessel spectral curve is the analogue of the Airy spectral curve and is dual to the intersection numbers of $\Tgn$ together with gravitational descendants, i.e.\ $\psi$-class insertions. Concretely, see \cite[Theorem 4.11]{Chidambaram:2022cqc}
\be 
\omega_{g,n}^\text{Bes}(\boldsymbol{z})=\sum_{\boldsymbol{d} \ge 0} \int_{\bM_{g,n}} \Tgn \prod_{i=1}^n \frac{(2d_i+1)!! \psi_i^{d_i} \d z_i}{z_i^{2d_i+2}}\ . \label{eq:Bessel curve correlators}
\ee
This is the special case of the proposition for $\omega_{0,1}(z)=\d z$.

Combining \eqref{eq:omega Bessel curve coefficients} and \eqref{eq:Bessel curve correlators} gives
\be 
\omega_{g,n}(\boldsymbol{z})=\sum_{m=0}^{g-1} \sum_{\boldsymbol{\alpha}\ge 1} \frac{(-1)^m}{m!}\int_{\bM_{g,n+m}} \Theta_{g,n+m}\prod_{i=1}^n \d\eta(z_i,\psi_i)\prod_{k=1}^m \big(y_{\alpha_k}(2\alpha_k-1)!!\psi_{n+k}^{\alpha_k}\big)\ , \label{eq:omega as psi classes}
\ee
where $\d \eta(z,\psi)=\sum_{d \ge 0} (2d+1)!! \psi^d z^{-2d-2} \d z$ is the differential that appears in the proposition. 

It remains to turn $\psi$-class intersections into $\kappa$-class intersections. One can do this systematically by using \cite[Theorem 5.7]{Norbury:2020vyi}. It gives the identity\footnote{The theorem is stated in terms of the Br\'ezin-Gross-Witten tau function $Z^\Theta$. To obtain the statement displayed here, one can restrict to a fixed genus and extract the coefficient of $\prod_{i=1}^n t_{d_i}$ on both sides.}
\be
\int_{\bM_{g,n}} \Tgn\, \mathrm{e}^{\sum_{j \ge 1} s_j \kappa_j} \prod_{i=1}^n \psi_i^{d_i}=\sum_{m \ge 0} \frac{1}{m!} \sum_{\boldsymbol{\alpha} \ge 1} \int_{\bM_{g,n+m}} \Theta_{g,n+m} \prod_{i=1}^n \psi_i^{d_i} \prod_{k=1}^m \sigma_{\alpha_k}\, \psi_{n+k}^{\alpha_k}\ , \label{eq:Norbury n external}
\ee
where $\sigma_k$ is defined through equality of the formal power series
\be
1-\exp\bigg(-\sum_{j=1}^\infty s_j u^j \bigg)=\sum_{k=1}^\infty \sigma_k u^k\ .
\ee
Thus, if we identify $\sigma_k=-(2k-1)!! y_k$, the right hand side of \eqref{eq:Norbury n external} has the form of the right hand side of \eqref{eq:omega as psi classes}. This proves the proposition.
\end{proof}

\section{Vanishing of the Arf Theta class} \label{app:arf theta class}
In this appendix, we analyze the Arf theta class $\Tgn^\zeta$ and show that all intersection numbers of all gravitational descendants vanish.\footnote{I am grateful to Alessandro Giacchetto for correspondence. He independently observed and proved proposition \ref{prop:Arf Theta vanishing} in \cite{Giacchetto:unpublished}.} 
\begin{proposition} \label{prop:Arf Theta vanishing}
    Define the Arf Theta class as
    \be 
        \Tgn^\zeta=2^{g-1}(-2)^n \big(p_*^+ c(\cF)-p_*^- c(\cF)\big)|_{2g-2+n}\ ,
    \ee
    where $\cF$ is the fermionic tangent space of supermoduli space and $p_*^\pm$ is the forgetful pushforward from the spin moduli space with even (odd) spin structure. Then any intersection number of the following form vanishes,
    \be 
    \int_{\bM_{g,n}}\Tgn^\zeta\eta=0\ ,
    \ee 
    where $\eta$ is an arbitrary product of $\psi$ and $\kappa$ classes.
\end{proposition}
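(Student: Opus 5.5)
The plan is to show that $\Tgn^\zeta$ satisfies the same structural axioms that Norbury uses to characterize the ordinary Theta class \cite{Norbury:2017eih}. Those axioms pin down all descendant intersection numbers in terms of a single initial value, $\int_{\bM_{1,1}}\Theta_{1,1}$, so it would then suffice to show that the corresponding Arf-signed initial value vanishes. Concretely, I would verify four properties, all for $\bsa=\boldsymbol{1}$.

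\begin{itemize}
\item (i) \emph{Pure degree:} $\Tgn^\zeta$ has pure degree $2g-2+n$. This holds by construction, since we take the top-degree part of $c(\cF)$.
\item (ii) \emph{Forgetful pullback:} $\Theta^\zeta_{g,n+1}=\psi_{n+1}\,\pi^*\Tgn^\zeta$. This holds separately on each Arf component, because forgetting an NS point does not change the parity of $h^0(S)$. The relation $e(\cF_{g,n+1})=\psi_{n+1}\pi^* e(\cF_{g,n})$ is exactly Norbury's argument, applied to even and odd components individually.
\item (iii) \emph{Separating boundary:} along a separating divisor, $\Tgn^\zeta$ restricts to the product of the two factors. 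The Arf invariant is additive, $(-1)^{\zeta}=(-1)^{\zeta_1}(-1)^{\zeta_2}$, and $\cF$ splits accordingly.
\item (iv) \emph{Non-separating boundary:} along a non-separating divisor, the restriction is $\Theta^\zeta_{g-1,n+2}$ (NS node) plus a Ramond-node contribution.
\end{itemize}

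Property (iv) is the step I expect to be the main obstacle. At a Ramond node the Arf invariant is not defined on the normalization. One has to show that the two smoothings of an R-node carry opposite Arf parities, so that the signed pushforward kills the Ramond-node term. This should follow from the standard fact that the quadratic form on $H_1$ changes by the value on the vanishing cycle, which distinguishes the two lifts. Equivalently, it should follow from Chiodo's formula \eqref{eq:Chiodo formula} restricted to $(\xi_0)_*$, combined with the sign. I would check this first on $\bM_{1,1}$ and $\bM_{1,2}$ by hand.

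Given (i)--(iv), I would rerun Norbury's uniqueness argument, which he uses to prove that $\Tgn$ is governed by the BGW $\tau$-function.
\begin{itemize}
\item First, using (ii) together with the projection formula, every $\kappa$-class insertion is reduced to $\psi$-insertions on spaces with more points. This is the same manipulation as in \eqref{eq:Norbury n external}, so it suffices to take $\eta=\prod_i\psi_i^{d_i}$.
\item By (i), a nonzero intersection requires $\sum_i d_i = g-1$.
\item Using (ii) again, points with $d_i=0$ can be removed: $\psi_{n+1}\pi^*$ produces dilaton-type relations. This reduces the problem to $\psi$-degrees $d_i\ge 1$ on $\bM_{g,n}$ with $n\le g-1$.
\item Finally, I would express $\psi_i^{d_i}$ on these spaces through boundary strata using Getzler/Ionel-type tautological relations in degree $\ge g$. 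Here $\Tgn^\zeta\cdot\psi^{\ge 1}$ lands in degree $\ge 2g-1+n$. Combining with (iii) and (iv) then expresses every genus-$g$ number as a polynomial in lower-genus numbers with no constant term.
\end{itemize}
The output of this step is a recursion in which every term contains at least one factor of a genus-$1$ number. Since all genus-$0$ numbers already vanish by degree (as noted in the main text), that genus-$1$ input is the only possible seed.

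The base case is $\int_{\bM_{1,1}}\Theta^\zeta_{1,1}$. On $\bM_{1,1}$ there are three even spin structures and one odd one. The even ones have $h^0(S)=0$, and $\cF=\pi_*L_\text{f}$ is a line bundle on each component; its degree follows from \eqref{eq:ch1 F}, with the boundary contributions $\delta_0,\delta_1$ differing between components. For the odd component I would compute $e(\cF)$ directly, noting that on this component $S$ is trivial on the smooth locus, so $L_\text{f}\cong\omega_\text{log}^{3/2}$ behaves differently over the node. I expect the weighted degrees to satisfy $3\,\deg_+ - \deg_- =0$ once the automorphism factors are included; that cancellation would give $\Theta^\zeta_{1,1}=0$. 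Feeding this into the recursion and inducting on $g$ and $n$ then yields $\int_{\bM_{g,n}}\Tgn^\zeta\eta=0$ for all $g\ge1$.
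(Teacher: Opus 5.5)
Your strategy is the paper's. You establish pure degree, CohFT factorization and the modified unit axiom $\psi_{n+1}\pi^*\Tgn^\zeta=\Theta^\zeta_{g,n+1}$, invoke Norbury's uniqueness, and check that the seed $\int_{\bM_{1,1}}\Theta^\zeta_{1,1}$ vanishes.

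The difference is how the axioms are verified. The paper first derives the explicit formula \eqref{eq:Thetazeta exponential}, using Chiodo, a graph expansion, and Arf-cancellation of Ramond-edge graphs, and then pulls back its exponent. You argue geometrically on each Arf component. That is a legitimate alternative, though it leaves the seed to be computed separately.

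Your claim in (iv) is true. Write the R-node spin structure on the normalization as $\eta_C$ with $\eta_C^2\cong\omega_C(y+q)$. The image of $H^0(\eta_C)$ in $\eta_C|_y\oplus\eta_C|_q$ is one of the two residue-isotropic lines, so exactly one gluing gains a section. But (iv) is not the obstacle you fear. On the R-node stratum, $\xi^*\cF^*$ is an extension of the node fiber $L_\text{f}|_{\text{node}}$ by the fermionic bundle of the normalization. Since $L_\text{f}|_{\text{node}}^{\otimes 2}$ is trivialized by the residue, $e(\cF)$ restricts to zero there on each component separately. This is also what the unsigned identity $\xi_{\mathrm{irr}}^*\Tgn=\Theta_{g-1,n+2}$ requires.

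There are two gaps.

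\emph{The base case.} It is the only place where vanishing is actually decided, and you leave it as an expectation. It is zero, but only with the correct weighting of the boundary terms in \eqref{eq:ch1 F}:
\begin{itemize}
\item On the even component, the NS-node point is a ramification point of index $2$ and enters $\delta_1$ with that multiplicity, so $\int_+\delta_1=\tfrac12$.
\item The unramified R-node points give equal $\delta_0$-terms on both components and cancel.
\end{itemize}
With $\deg p^\pm=\tfrac32,\tfrac12$ (including the spin $\ZZ_2$) one gets $\int(p^+_*-p^-_*)c_1(\cF^*)=(\tfrac32-\tfrac12)\cdot\tfrac12\cdot\tfrac1{24}-\tfrac1{24}\cdot\tfrac12=0$. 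The reduced NS divisor would instead give $\tfrac1{96}\neq0$. The paper obtains the same result in one line from \eqref{eq:Thetazeta exponential}: $\tfrac1{12}\big(\tfrac{11}{24}+\tfrac1{24}-\tfrac12\big)=0$.

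\emph{The uniqueness sketch.} It misapplies tautological vanishing. After your reductions, $\eta=\prod\psi_i^{d_i}$ has degree $g-1$. That is below the threshold $g$ of Ionel/Graber--Vakil on $\cM_{g,n}$; for example, $\psi_1^{g-1}\neq0$ on $\cM_{g,1}$. So $\eta$ need not be a boundary class, and the remark about $\Tgn^\zeta\cdot\psi^{\ge1}$ does not fix this. You can repair it in either of two ways:
\begin{itemize}
\item Use (ii) to move one $\psi$ out of the Theta class, $\Tgn^\zeta\prod_i\psi_i^{d_i}=\pi^*\Theta^\zeta_{g,n-1}\,\psi_n^{d_n+1}\prod_{i<n}\psi_i^{d_i}$, whose $\psi$-part has degree $g$. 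Only then write it through boundary strata and apply (iii) and (iv).
\item Push down to $\bM_g$, where Looijenga's vanishing in degree $g-1$ applies to the resulting $\kappa$-polynomial.
\end{itemize}
Alternatively, cite \cite[Proposition 3.2]{Norbury:2017eih} directly, as the paper does.
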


We will prove Proposition \ref{prop:Arf Theta vanishing} by employing the following strategy. We first establish that $\Tgn^\zeta$ behaves naturally under pullback. Mathematically, it defines a cohomological field theory with a 1-dimensional state space. The flat unit axiom is replaced by the pullback in Lemma~\ref{lem:unit}. As shown by Norbury \cite{Norbury:2017eih}, these properties characterize intersection numbers of $\Tgn^\zeta$ uniquely up to normalization. An explicit computation shows that the normalization is zero, which forces vanishing of all intersection numbers.
\begin{lemma}
    The weighted pushforward of the total Chern class admits the following explicit expression in $H^\bullet(\bM_{g,n})$ 
    \begin{multline} 
        \Tgn^\zeta=(-2)^{2g-2+n}\exp\bigg(-\sum_{d \ge 1} \frac{1}{d(d+1)}\bigg[B_{d+1}(\tfrac{3}{2})\kappa_d-\sum_{i=1}^n B_{d+1}(\tfrac{1}{2}) \psi_i^d\\
        +B_{d+1}(\tfrac{1}{2}) \xi_* \Big(\frac{\psi_\bullet^d-(-\psi_\circ)^d}{\psi_\bullet+\psi_\circ}\Big)\bigg]\bigg)\bigg|_{2g-2+n}\ . \label{eq:Thetazeta exponential}
    \end{multline}
    Here, $\xi$ is the inclusion of all boundary divisors into moduli space, i.e.\ 
    \be 
    \xi_*=\frac{1}{2}\sum_{h=0}^g \sideset{}{'}\sum_{I \subset [n]} (\xi_{h,I})_*+\frac{1}{2} (\xi_\mathrm{irr})_*\ .
    \ee
    The factor of $\frac{1}{2}$ avoids overcounting. The prime on the sum restricts the sum to stable terms.
\end{lemma}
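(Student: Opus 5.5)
We would follow the proof of the analogous formula for Norbury's Theta class. The Arf-signed pushforward is a pushforward from the spin moduli space twisted by the locally constant function $(-1)^\zeta$. The key input is the observation, made by Giacchetto et al.\ for spin Chiodo classes, that the signed pushforward is computed by Chiodo's Grothendieck--Riemann--Roch formula \eqref{eq:Chiodo formula}, with the boundary terms modified by the Arf sign.

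The steps are as follows.

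First, express the total Chern class by Newton's identities, writing the total Chern class of $\cF = (\pi_* L_\text{f})^*$ on $\bM^\text{spin}_{g,\boldsymbol{1}}$ as
\be
c(\cF)=\exp\Big(\sum_{d\ge1}(-1)^{d-1}(d-1)!\,\ch_d(\cF)\Big),
\ee
and use $\ch_d(\cF)=(-1)^d\ch_d(\cF^*)$. Inserting Chiodo's formula with $s=3$ and $m_i=1$ gives $B_{d+1}(\tfrac{3}{2})$ and $B_{d+1}(\tfrac12)$ for the $\kappa$ and $\psi$ terms. The sign $(-1)^{d-1}(d-1)!(-1)^d/(d+1)!$ combines to $-1/(d(d+1))$, which matches the exponent in \eqref{eq:Thetazeta exponential}.

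Second, handle the boundary. In Chiodo's formula there are NS nodes ($q=1$, coefficient $B_{d+1}(\tfrac12)$) and R nodes ($q=0$, coefficient $B_{d+1}(0)$). The crucial claim is that after signed pushforward only NS nodes survive. At a Ramond node, the two branches carry Ramond punctures on each side, so the Arf invariant is not defined separately on each branch. Instead, the boundary stratum of spin curves with an R node contains equally many even and odd structures. We would check this via the twisting of the gluing, which flips the Arf parity, so that $p^+_*-p^-_*$ annihilates classes supported there. On NS nodes the Arf invariant is additive (mod 2) under gluing. This explains the single $B_{d+1}(\tfrac12)\xi_*$ term, and the automorphism factors combine into the $\tfrac12$ normalization of $\xi_*$.

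Third, on the interior and NS strata, since the $\kappa$, $\psi$ and NS-boundary classes are pulled back from $\bM_{g,n}$, the projection formula gives $(p^+_*-p^-_*)\,p^*\alpha=\big(\sum_\text{spin}(-1)^\zeta\big)\alpha$ componentwise. The signed degree is $\tfrac12\cdot 2^g$, including the $\ZZ_2$ automorphism. Combining this with the prefactor $2^{g-1}(-2)^n$, and with the factor of $2$ per degree from rescaling inherent in the conventions, should reproduce $(-2)^{2g-2+n}$ after truncating to degree $2g-2+n$.

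The main obstacle is the second step. We must make precise that the exponential of the boundary terms interacts correctly with the signed pushforward, since products of boundary pushforwards involve deeper strata. There, Arf cancellation must be verified stratum by stratum: any stratum with at least one R node cancels, and strata with only NS nodes carry the product of the branch Arf signs. We would first try to phrase this as a graph sum over stable graphs with edge decorations, following Janda--Pandharipande--Pixton--Zvonkine's form of Chiodo's formula. In that form the signed sum over spin structures on each graph factorizes, and edges with R twisting give zero by the parity flip.
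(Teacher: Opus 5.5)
Your ingredients are the paper's: Newton's identity plus Chiodo's formula for $c(\cF)$ on $\bM^{\rm spin}_{g,\boldsymbol 1}$ (your coefficient $-1/(d(d+1))$ is right), and cancellation of every stratum containing a Ramond node because flipping the gluing there flips the Arf invariant. The JPPZ graph expansion you name as the rigorous form of Step 2 is exactly what the paper uses. The gap is in Step 3, which is where the content of the lemma, the coefficient of the boundary term, is decided.

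The NS-boundary classes $(\xi_1)_*\beta$ are \emph{not} pulled back from $\bM_{g,n}$. In the conventions of Chiodo's formula, $p$ has multiplicity $2$ along every boundary divisor. This is the origin of the paper's $c_1(\cN)=-\tfrac12(\psi_\bullet+\psi_\circ)$ on spin moduli space. As a result, $p^*(\xi_*\beta)$ is the \emph{sum} $(\xi_1)_*\beta+(\xi_0)_*\beta$ of NS and R pieces. So there is a factor of $2$ per boundary pushforward (per edge of a stable graph) to track, and your proposal never tracks it.

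Instead you invoke a ``factor of $2$ per degree from rescaling''. No such factor exists: $2^{g-1}\cdot 2^{g-1}(-2)^n$ already equals $(-2)^{2g-2+n}$, and any extra rescaling gives a wrong class. Likewise, the $\tfrac12$ in $\xi_*$ is only the double counting of $(h,I)\leftrightarrow(g-h,I^c)$ and of the two branches at a non-separating node. It is not a spin automorphism factor.

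This is not cosmetic. At $(g,n)=(1,1)$ the signed pushforward of $c_1(\cF)$ integrates to $-\tfrac1{48}$ from the $\kappa_1,\psi_1$ terms and $+\tfrac1{48}$ from the NS boundary term. A factor-of-two error in the latter destroys the cancellation on which Proposition~\ref{prop:Arf Theta vanishing} rests.

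The paper closes this step as follows:
\begin{itemize}
\item In the graph expansion each weighted graph carries $2^{|\mathcal{E}_\Gamma|}/|\Aut\Gamma|$, coming from the $-\tfrac12$ normal bundle.
\item Graphs with a Ramond edge drop out.
\item For all-NS graphs the signed vertex counts give $\prod_v 2^{g_v-1}=2^{g-1-|\mathcal{E}_\Gamma|}$, so every graph ends up with the uniform factor $2^{g-1}$.
\item The graph sum on $\bM_{g,n}$, where $c_1(\cN)=-(\psi_\bullet+\psi_\circ)$, then re-sums to $2^{g-1}\exp(\alpha_{g,n})$, with $\alpha_{g,n}$ the exponent in \eqref{eq:Thetazeta exponential}.
\end{itemize}

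Your projection-formula idea can be turned into a shorter proof, but only after proving the identity $\sum_{q}(\xi_q)_*\beta=p^*(\xi_*\beta)$ in the normalization of \eqref{eq:Chiodo formula}. That identity is the same per-edge cancellation in disguise, and it is what your Step 3 silently assumes. Given it, the argument runs:
\begin{itemize}
\item The exponent on spin moduli space is $p^*\alpha_{g,n}+\rho$, with $\rho$ a combination of $(\xi_0)_*$ classes.
\item Hence $c(\cF)=p^*\mathrm{e}^{\alpha_{g,n}}$ plus classes supported on Ramond strata and pulled back from the corresponding $\bM_\Gamma$.
\item Those extra classes have zero signed pushforward, by your flip argument.
\item The signed degree $2^{g-1}$ of $p$ gives $p^+_*c(\cF)-p^-_*c(\cF)=2^{g-1}\mathrm{e}^{\alpha_{g,n}}$, which yields the lemma with exactly the stated prefactor.
\end{itemize}
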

\begin{proof}
From the Chiodo formula \eqref{eq:Chiodo formula}, we have
\be 
\ch_d(\cF)=\frac{(-1)^d}{(d+1)!} \bigg(B_{d+1}(\tfrac{3}{2})\kappa_d-\sum_{i=1}^n B_{d+1}(\tfrac{1}{2}) \psi_i^d+\sum_{q=0,1} B_{d+1}(\tfrac{q}{2}) (\xi_q)_* \Big(\frac{\psi_\bullet^d-(-\psi_\circ)^d}{\psi_\bullet+\psi_\circ}\Big)\bigg)
\ee
in $H^\bullet(\bM_{g,n}^\text{spin})$. We can translate this to the total Chern class and obtain the formula
\begin{align}
    c(\cF)&=\exp\bigg(-\sum_{d\ge 1} (-1)^d (d-1)!\, \ch_d(\cF)\bigg) \\
    &=\exp\bigg(-\sum_{d \ge 1} \frac{1}{d(d+1)}\bigg[B_{d+1}(\tfrac{3}{2})\kappa_d-\sum_{i=1}^n B_{d+1}(\tfrac{1}{2}) \psi_i^d\nonumber\\
    &\qquad\qquad\qquad\qquad+\sum_{q=0,1} B_{d+1}(\tfrac{q}{2}) (\xi_q)_* \Big(\frac{\psi_\bullet^d-(-\psi_\circ)^d}{\psi_\bullet+\psi_\circ}\Big)\bigg]\bigg)\ .
\end{align}
We now need to push this forward to moduli space while accounting for the weighting.
The class $c(\cF)$ admits an expansion as a sum over weighted stable graphs $\Gamma$, see \cite[Proposition 4]{JPPZ}. This graph expansion is obtained by repeatedly pulling back to the boundary strata and using the excess formula. In the process one uses that the Chern class of the normal bundle to a boundary divisor is $c_1(\cN)=-\frac{1}{2}(\psi_\bullet+\psi_\circ)$. The factor of $\frac{1}{2}$ is due to a $\ZZ_2$ automorphism factor of the boundary stratum in spin moduli space. We obtain
\begin{align}
c(\cF)&=\sum_{\Gamma \in \mathcal{G}_{g,n}} \sum_{w \in \mathcal{W}_{\Gamma}}\frac{2^{|\mathcal{E}_\Gamma|}}{|\Aut \Gamma|} (\xi_{\Gamma,w})_*\bigg[ \prod_{v \in \mathcal{V}_\Gamma} \exp\bigg(-\sum_{d \ge 1}\frac{B_{d+1}(\frac{3}{2}) \kappa_d}{d(d+1)} \bigg) \nonumber\\
&\qquad\times \prod_{i=1}^n \exp\bigg(\sum_{d \ge 1} \frac{B_{d+1}(\frac{1}{2}) \psi_i^d}{d(d+1)}\bigg)\nonumber\\
&\qquad\prod_{e=(\bullet,\circ) \in \mathcal{E}_\Gamma} \frac{1-\exp\Big(\sum_{d \ge 1} \frac{B_{d+1}(\frac{w(e)}{2})}{d(d+1)} \big(\psi_\bullet^d-(-\psi_\circ)^d\big)\Big)}{\psi_\bullet+\psi_\circ} \bigg]\ . \label{eq:c(F) graph expansion}
\end{align}
Here, the weighting corresponds to an assignment $w(e) \in \{0,1\}$ of NS- or R-sector to each edge. In every vertex, the sum of outgoing Ramond edges has to be even. The map $\xi_{\Gamma,w}$ is the inclusion of the stratum corresponding to the weighted graph $(\Gamma,w)$ into $\bM_{g,\bsa}^\text{spin}$.

We now need to push \eqref{eq:c(F) graph expansion} forward to moduli space and weigh with the Arf invariant. This can be done by following the proof of \cite[Proposition 9.21]{Giacchetto:2021qss}. In the process, graphs with Ramond edges will cancel, essentially since for them the Arf invariant can be continuously changed. Thus we can restrict to weightings where $w(e)=1$ for all edges, i.e.\ we have only NS edges. This is intuitive since it says that the string theory should not have propagating Ramond-states which are not part of the spectrum. In each graph, we can then push forward to moduli space with the inclusion of the Arf invariant. This gives a factor of $\prod_{v \in \mathcal{V}} 2^{g_v-1}$, where $g_v$ is the genus associated to a vertex, since the difference of even and odd spin structures in each vertex is $2^{g_v}$ and the extra $\frac{1}{2}$ accounts for the $\ZZ_2$ automorphism factor of each vertex. We have
\be 
\sum_{v \in \mathcal{V}_\Gamma} (g_v-1)=\sum_v g_v-|\mathcal{V}_\Gamma|=g-h_1(\Gamma)-(|\mathcal{E}_\Gamma|+1-h_1(\Gamma))=g-1-|\mathcal{E}_\Gamma|\ .
\ee
Here, $h_1(\Gamma)=g-\sum_v g_v$ is the number of loops in the stable graph. We also used Euler's formula $h_1(\Gamma)=|\mathcal{E}_\Gamma|-|\mathcal{V}_\Gamma|+1$. The factor $2^{-|\mathcal{E}_\Gamma|}$ compensates exactly a factor $2^{|\mathcal{E}_\Gamma|}$ present in \eqref{eq:c(F) graph expansion}. Thus
\begin{align} 
p_*^+ c(\cF)-p_*^- c(\cF)&=\sum_{\Gamma \in \mathcal{G}_{g,n}} \frac{2^{g-1}}{|\Aut \Gamma|} (\xi_\Gamma)_*\bigg[ \prod_{v \in \mathcal{V}_\Gamma} \exp\bigg(-\sum_{d \ge 1}\frac{B_{d+1}(\frac{3}{2}) \kappa_d}{d(d+1)} \bigg) \nonumber\\
&\qquad\prod_{i=1}^n \exp\bigg(\sum_{d \ge 1} \frac{B_{d+1}(\frac{1}{2}) \psi_i^d}{d(d+1)}\bigg) \nonumber\\
&\qquad\times \prod_{e=(\bullet,\circ) \in \mathcal{E}_\Gamma} \frac{1-\exp\Big(\sum_{d \ge 1} \frac{B_{d+1}(\frac{1}{2})}{d(d+1)} \big(\psi_\bullet^d-(-\psi_\circ)^d\big)\Big)}{\psi_\bullet+\psi_\circ} \bigg]\ .
\end{align}
As a final step, one can undo the graph expansion by the same logic that led to \eqref{eq:c(F) graph expansion}. This is now on ordinary moduli space and the first Chern class of the normal bundle to the boundary divisor is $c_1(\cN)=-\psi_\bullet-\psi_\circ$. Therefore, this does not lead to further factors of two and proves the lemma.
\end{proof}
\begin{lemma}\label{lem:CohFT}
    $\Tgn^\zeta$ satisfies the factorization axioms of a one-dimensional cohomological field theory, i.e.\ 
    \be 
    \xi_{h,I}^*\Tgn^\zeta=\Theta_{h,|I|+1}^\zeta \boxtimes \Theta_{g-h,|I^c|+1}^\zeta\ , \qquad \xi_\mathrm{irr}^*\Tgn^\zeta=\Theta_{g-1,n+2}^\zeta\ . \label{eq:CohFT Thetazeta}
    \ee
\end{lemma}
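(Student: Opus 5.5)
The plan is to prove the factorization property directly from the exponential formula \eqref{eq:Thetazeta exponential} of the previous lemma. That formula expresses $\Tgn^\zeta$ as the degree-$(2g-2+n)$ part of $(-2)^{2g-2+n}\exp(\Phi_{g,n})$. The exponent $\Phi_{g,n}$ is a linear combination of $\kappa_d$, $\psi_i^d$ and boundary pushforwards $\xi_*\big((\psi_\bullet^d-(-\psi_\circ)^d)/(\psi_\bullet+\psi_\circ)\big)$. Since pullback is a ring homomorphism, $\xi^*\exp(\Phi_{g,n})=\exp(\xi^*\Phi_{g,n})$, so it suffices to compute the pullback of each summand of $\Phi_{g,n}$ under $\xi_{h,I}$ and $\xi_\mathrm{irr}$.

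The first step uses the standard pullback rules on $\bM_{g,n}$. Under $\xi_{h,I}$, $\psi_i$ pulls back to the $\psi_i$ on whichever factor contains $i$. $\kappa_d$ pulls back to $\kappa_d\otimes 1+1\otimes\kappa_d$ (with the Arbarello--Cornalba convention $\kappa_d=\pi_*\psi_{n+1}^{d+1}$, which is the one used in the Chiodo formula). The analogous statements hold for $\xi_\mathrm{irr}$. For the boundary term I would use the excess intersection formula, exactly as in the graph expansion of \cite[Proposition 4]{JPPZ} invoked in the previous proof. The pullback of $\xi_*(\cdot)$ along $\xi_{h,I}$ splits into two pieces. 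The first is the sum of the boundary terms on the two factors, which reassembles into $\Phi_{h,|I|+1}\boxtimes 1+1\boxtimes\Phi_{g-h,|I^c|+1}$ except for the new markings $\bullet,\circ$. The second is an excess (self-intersection) term, in which the divisor is pulled back to itself: the integrand is evaluated at the normal bundle, with $c_1(\cN)=-\psi_\bullet-\psi_\circ$. The self-intersection term contributes $B_{d+1}(\tfrac12)\,\big(\psi_\bullet^d-(-\psi_\circ)^d\big)/(\psi_\bullet+\psi_\circ)\cdot(-(\psi_\bullet+\psi_\circ))=-B_{d+1}(\tfrac12)(\psi_\bullet^d-(-\psi_\circ)^d)$.

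The key cancellation is the following. The factor $\Theta^\zeta_{h,|I|+1}\boxtimes\Theta^\zeta_{g-h,|I^c|+1}$ requires the terms $+B_{d+1}(\tfrac12)\psi_\bullet^d$ and $+B_{d+1}(\tfrac12)\psi_\circ^d$ from the two new markings, after the overall minus sign in front of the bracket. The excess term supplies $-\sum_d \frac{1}{d(d+1)}\big(-B_{d+1}(\tfrac12)(\psi_\bullet^d-(-\psi_\circ)^d)\big)$. To match, I would use $B_{d+1}(\tfrac12)=(-1)^{d+1}B_{d+1}(\tfrac12)$, which makes $B_{d+1}(\tfrac12)$ vanish for even $d$. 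Hence only odd $d$ contribute, and for these $-(-\psi_\circ)^d=\psi_\circ^d$, so the excess term reproduces exactly the missing $\psi_\bullet^d+\psi_\circ^d$ terms. This sign check is the main obstacle. I would verify it carefully, including the $\tfrac12$ in $\xi_*$ against the two orderings of the node branches, and the fact that only NS edges survive, because the Ramond edges were already cancelled by the Arf weighting in the previous lemma.

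Finally, I would match the prefactors and degrees. For the separating case, $(2h-2+|I|+1)+(2(g-h)-2+|I^c|+1)=2g-2+n$, so the powers of $(-2)$ and the degree truncations agree. The truncation to top degree $2g-2+n$ commutes with pullback, because pullback preserves degree. The irreducible case is identical with $(g-1,n+2)$, since $2(g-1)-2+n+2=2g-2+n$. In this case $\kappa_d$ and $\xi_*$ pull back to their counterparts on $\bM_{g-1,n+2}$ plus the excess term, by the same computation.
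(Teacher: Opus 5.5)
Your approach is the paper's: pull back the exponent $\alpha_{g,n}$ term by term, use $\xi_{h,I}^*\kappa_d=\kappa_d\boxtimes 1+1\boxtimes\kappa_d$ and the transverse pullbacks of the other boundary divisors, and let the self-intersection (excess) term supply the node $\psi$-classes. Your parity check is correct and more explicit than what the paper writes. Since $B_{d+1}(\tfrac12)=0$ for even $d$, one has $-(\psi_\bullet^d-(-\psi_\circ)^d)=-(\psi_\bullet^d+\psi_\circ^d)$ in every term that contributes. Also, the two orderings $(h,I)$ and $(g-h,I^c)$, each weighted by $\tfrac12$, combine into a single excess term. The non-separating case is also fine, since there is no product there and truncation simply commutes with pullback.

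There is, however, a gap in your final step for the separating divisor. Write $D_1=2h-1+|I|$ and $D_2=2(g-h)-1+|I^c|$. What you actually obtain is
\[
\xi_{h,I}^*\Big(\mathrm{e}^{\alpha_{g,n}}\big|_{2g-2+n}\Big)=\sum_{a+b=2g-2+n}\mathrm{e}^{\alpha_{h,|I|+1}}\big|_{a}\boxtimes \mathrm{e}^{\alpha_{g-h,|I^c|+1}}\big|_{b}\ .
\]
The identity $D_1+D_2=2g-2+n$ only shows that $(a,b)=(D_1,D_2)$ is one of the summands. It does not remove the cross terms with $a=D_1+j$ and $b=D_2-j$ for $j\neq 0$. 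An exponential of a class with non-zero degree-one part generically has components in every degree up to the dimension. For example, for $\xi_{2,\emptyset}\colon\bM_{2,1}\times\bM_{2,1}\to\bM_{4,0}$ the term $(a,b)=(4,2)$ is not excluded by dimension.

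The missing input is that $\mathrm{e}^{\alpha_{g,n}}$ vanishes in all degrees above $2g-2+n$. You cannot read this off the exponential formula as stated in the previous lemma, which only records the truncation. It comes from the untruncated identity established in that lemma's proof, $2^{g-1}\mathrm{e}^{\alpha_{g,n}}=p^+_*c(\cF)-p^-_*c(\cF)$. Here $c(\cF)$ has no components above $\rk\cF=2g-2+n$, and $p^\pm$ are finite maps that preserve degree. Applying this vanishing on each factor kills every cross term, because $a\neq D_1$ forces either $a>D_1$ or $b>D_2$. What remains is exactly $\Theta^\zeta_{h,|I|+1}\boxtimes\Theta^\zeta_{g-h,|I^c|+1}$, which is how the paper closes its argument.
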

\begin{proof}
    The proof of the two equations is essentially identical. Let us explain the proof of the first one. Write
    \be 
        \alpha_{g,n}=-\sum_{d \ge 1} \frac{1}{d(d+1)}\bigg[B_{d+1}(\tfrac{3}{2})\kappa_d-\sum_{i=1}^n B_{d+1}(\tfrac{1}{2}) \psi_i^d\\
        +B_{d+1}(\tfrac{1}{2}) \xi_* \Big(\frac{\psi_\bullet^d-(-\psi_\circ)^d}{\psi_\bullet+\psi_\circ}\Big)\bigg] \ . \label{eq:alpha gn definition}
    \ee
    Since the exponential commutes with the pullback, it will be sufficient to compute $\xi_{h,I}^* \alpha_{g,n}$. We recall the basic pullbacks
    \begin{subequations}
    \begin{align} 
        \xi_{h,I}^*\kappa_d&=\kappa_d \boxtimes 1+1 \boxtimes \kappa_d\ , \label{eq:boundary pullback kappa classes}\\
        \xi_{h,I}^* \psi_i&=\begin{cases}
            \psi_i \boxtimes 1\ , &i \in I\ , \\
            1 \boxtimes \psi_i\ , &i \in I^c\ .
        \end{cases} \label{eq:boundary pullback psi classes}
    \end{align}
    \end{subequations}
    where the superscripts refer to the two components of moduli space. We can pull back $(\xi_{h',I'})_*\beta$ and $(\xi_\mathrm{irr})_*\beta$ for some class $\beta$ as described e.g.\ in \cite[Corollary 3.7]{Bae:2022prestable}. For $(h',I') \ne (h,I)$ as well as for the non-separating divisor, the result is the pushforward over all graphs that are in the intersection of the divisors $\mathcal{D}_{h,I}$ and $\mathcal{D}_{h',I'}$ or $\mathcal{D}_\text{irr}$, respectively. 
    To state the precise formula, let us write 
    \be 
        (h',I') \le  (h,I)  \quad\Longleftrightarrow\quad h' \le h \text{ and }I' \subseteq I\text{ or }g-h' \le h \text{ and }(I')^c \subseteq I\ .
    \ee
    We also write $(h',I')<(h,I)$ when $(h',I') \le (h,I)$ and $(h',I') \ne (h,I)$. Then for $k \ge 0$ and $\ell \ge 0$
    \begin{subequations}
\begin{align}
\xi_{h,I}^*\big((\xi_{h',I'})_*(
  \psi_\bullet^k\psi_\circ^\ell)\big)
&=
\begin{cases}
-\psi_{|I|+1}^{k+1}\boxtimes \psi_{|I^c|+1}^\ell
-\psi_{|I|+1}^k \boxtimes \psi_{|I^c|+1}^{\ell+1}
& \!\text{if } (h',I')=(h,I)\ ,
\\
(\xi_{h',I'})_*(
  \psi_\bullet^k\psi_\circ^\ell) \boxtimes 1 & \!\text{if } (h',I')<(h,I)\ , \\
  1 \boxtimes (\xi_{h',I'})_*(
  \psi_\bullet^k\psi_\circ^\ell) & \!\text{if } (h',I')<(g-h,I^c)\ , \\
  0 & \!\text{otherwise}\ ,
\end{cases} \label{eq:boundary pullback separating boundary pushforward}\\
\xi_{h,I}^*\big((\xi_\mathrm{irr})_*(\psi_\bullet^k \psi_\circ^\ell)\big)&=(\xi_\mathrm{irr})_*(\psi_\bullet^k \psi_\circ^\ell) \boxtimes 1 +1 \boxtimes (\xi_\mathrm{irr})_*(\psi_\bullet^k \psi_\circ^\ell)\ .
\end{align}\label{eq:boundary pullback boundary pushforward}%
\end{subequations} 
The first case in \eqref{eq:boundary pullback separating boundary pushforward} is the excess formula and captures the self-intersection of the boundary divisor $\mathcal{D}_{h,I}$.
In writing the second case of \eqref{eq:boundary pullback separating boundary pushforward}, we can replace $(h',I')$ by $(g-h',I^c)$ if necessary and assume that $h' \le h$ and $I' \subseteq I$. Similarly in the third case, we can assume that $h' \le g-h$ and $I' \subseteq I^c$. 

With \eqref{eq:boundary pullback kappa classes}, \eqref{eq:boundary pullback psi classes} and \eqref{eq:boundary pullback boundary pushforward}, it is simple to compute the pullback of $\alpha_{g,n}$. Since $(h',I')$ in \eqref{eq:boundary pullback separating boundary pushforward} runs over all divisors of $\bM_{h,|I|+1}$ and $\bM_{g,|I^c|+1}$, this produces exactly the expected boundary contributions. We find
    \begin{align}
        \xi_{h,I}^*\alpha_{g,n}&=-\sum_{d \ge 1} \frac{1}{d(d+1)} \bigg[ B_{d+1}(\tfrac{3}{2})(\kappa_d \boxtimes 1+1 \boxtimes \kappa_d)-\sum_{i \in I} B_{d+1}(\tfrac{1}{2})\psi_i^d \boxtimes 1\nonumber\\
        &\qquad-\sum_{i \in I^c} B_{d+1}(\tfrac{1}{2}) 1 \boxtimes \psi_i^d-B_{d+1}(\tfrac{1}{2}) (\psi_{|I|+1}^d \boxtimes 1+1 \boxtimes \psi_{|I^c|+1}^d)\nonumber\\
        &\qquad+B_{d+1}(\tfrac{1}{2}) \bigg(\xi_* \Big(\frac{\psi_\bullet^d-(-\psi_\circ)^d}{\psi_\bullet+\psi_\circ}\Big) \boxtimes 1+1 \boxtimes \xi_* \Big(\frac{\psi_\bullet^d-(-\psi_\circ)^d}{\psi_\bullet+\psi_\circ}\Big)\bigg)\bigg]\\
        &=\alpha_{h,|I|+1} \boxtimes 1 +1 \boxtimes \alpha_{g-h,|I^c|+1}\ .
    \end{align}
    The $\psi$-classes for the nodes that appear in the second lines arise from the self-intersection case of the boundary classes \eqref{eq:boundary pullback separating boundary pushforward}.
    
    It remains to take the exponential and restrict to the pure degree $2g-2+n$. On the right hand side of the first equation of \eqref{eq:CohFT Thetazeta}, the degree can be in principle be distributed in any way on the two factors. However, $\Tgn^\zeta$ is defined as the pushforward of the total Chern class of a rank $2g-2+n$ bundle. Therefore degrees higher than $2g-2+n$ vanish by construction. The only non-vanishing degree assignment on the RHS is then 
    \be 
        \xi_{h,I}^* \exp(\alpha_{g,n}) \big|_{2g-2+n}= \exp(\alpha_{h,|I|+1}) \big|_{2h-1+|I|} \boxtimes \exp(\alpha_{g-h,|I^c|+1}) \big|_{2(g-h)-1+|I^c|}\ ,
    \ee
    which is the claimed equation. The proof for the pullback to the non-separating divisor is essentially identical and we omit the details.
\end{proof}

\begin{lemma} \label{lem:unit}
    $\Tgn^\zeta$ satisfies
    \be 
        \psi_{n+1}\pi^*\Tgn^\zeta=\Theta_{g,n+1}^\zeta\ .
    \ee
    Here, $\pi: \bM_{g,n+1} \longrightarrow \bM_{g,n}$ is the forgetful morphism. 
\end{lemma}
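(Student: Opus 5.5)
The plan is to work entirely with the explicit exponential formula \eqref{eq:Thetazeta exponential}, in the same spirit as the proof of Lemma~\ref{lem:CohFT}. Write $\Tgn^\zeta=(-2)^{2g-2+n}\exp(\alpha_{g,n})|_{2g-2+n}$ with $\alpha_{g,n}$ as in \eqref{eq:alpha gn definition}. The goal is to show
\[
\pi^*\alpha_{g,n}=\alpha_{g,n+1}-\log\big(1-\beta\,\psi_{n+1}\big)
\]
for some universal function $\beta$, or more generally a correction depending only on $\psi_{n+1}$. Then the degree truncation should force the identity $\psi_{n+1}\pi^*\Tgn^\zeta=\Theta^\zeta_{g,n+1}$. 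The constant $-2$ per marked point in the prefactor must match a factor produced by this correction.

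The first step is to collect the standard pullback formulas under the forgetful map:
\[
\pi^*\kappa_d=\kappa_d-\psi_{n+1}^d,\qquad \pi^*\psi_i=\psi_i-D_{i,n+1}.
\]
Here $D_{i,n+1}$ is the divisor where points $i$ and $n+1$ bubble off on a genus-0 component. The boundary pushforward term $\xi_*(\cdots)$ pulls back to the corresponding sum over boundary divisors of $\bM_{g,n+1}$, except for the new divisors $D_{i,n+1}$. For those, the node class $\psi_\bullet$ on the genus-0 side vanishes, since $\bM_{0,3}$ is a point, and $\psi_\circ$ restricts to $\psi_i$.

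The second step is to track the $D_{i,n+1}$ contributions. Using $D_{i,n+1}^2=-\pi^*\psi_i\,D_{i,n+1}$, one gets $(\psi_i-D_{i,n+1})^d=\psi_i^d-(-1)^{d-1}\ldots$. These contributions should cancel exactly against the $D_{i,n+1}$ part of the boundary term $B_{d+1}(\tfrac12)\,\xi_*\big((\psi_\bullet^d-(-\psi_\circ)^d)/(\psi_\bullet+\psi_\circ)\big)$, which on $D_{i,n+1}$ reduces to $-(-\psi_i)^{d-1}$, with matching signs. Both the $\psi_i$ and the boundary terms carry the same Bernoulli value $B_{d+1}(\tfrac12)$, which is what makes this cancellation plausible.

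What remains is the $\kappa$ discrepancy and the absent $\psi_{n+1}$ term:
\[
\pi^*\alpha_{g,n}-\alpha_{g,n+1}=\sum_{d\ge1}\frac{B_{d+1}(\tfrac32)-B_{d+1}(\tfrac12)}{d(d+1)}\,\psi_{n+1}^d.
\]
Since $B_{d+1}(x+1)-B_{d+1}(x)=(d+1)x^d$, this sum equals $\sum_d \frac{(1/2)^d}{d}\psi_{n+1}^d=-\log\big(1-\tfrac12\psi_{n+1}\big)$. Hence
\[
\pi^*\exp(\alpha_{g,n})=\frac{\exp(\alpha_{g,n+1})}{1-\tfrac12\psi_{n+1}}.
\]

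The final step, which I expect to be the main obstacle, is the degree bookkeeping. Multiply by $\psi_{n+1}$ and take the degree-$(2g-1+n)$ part. This gives
\[
\sum_{k\ge0}2^{-k}\,\psi_{n+1}^{k+1}\,\exp(\alpha_{g,n+1})\big|_{2g-2+n-k}.
\]
I need the $k=0$ term to reproduce $\exp(\alpha_{g,n+1})|_{2g-1+n}$ up to the factor $-2$, and I need all the other terms to combine correctly. The intended identity is that $\exp(\alpha_{g,n+1})$ restricted to degree $\le 2g-1+n$, multiplied by $(1-\tfrac12\psi_{n+1})$, equals the Chern class of a bundle. Specifically, one should use the exact sequence relating $\cF$ on $\bM_{g,n+1}^{\rm spin}$ to $\pi^*\cF$ plus the line $L_{\rm f}|_{z_{n+1}}$, whose first Chern class is $-\tfrac12\psi_{n+1}$ up to sign conventions. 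This yields the exact relation
\[
c(\cF_{n+1})=\pi^*c(\cF_n)\,\big(1-\tfrac12\psi_{n+1}\big)
\]
in the appropriate sign, with no higher-degree remainder. Taking top degrees gives $e(\cF_{n+1})=-\tfrac12\psi_{n+1}\,\pi^*e(\cF_n)$, and the normalization ratio $(-2)^{2g-1+n}/(-2)^{2g-2+n}=-2$ cancels the $-\tfrac12$.

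If the sign or the boundary cancellation in the second step fails, the fallback is to prove this top-Chern-class relation directly on spin moduli space. The route would be the short exact sequence $0\to\pi^*\pi_*L_{\rm f}\to\pi_*L'_{\rm f}\to L'_{\rm f}|_{z_{n+1}}\to0$, followed by pushing forward with the Arf signing, which commutes with $\pi$ because forgetting an NS point preserves the Arf invariant.
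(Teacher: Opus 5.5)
Your steps one to three are the paper's computation: the same pullback formulas, the $D_{i,n+1}$ terms matched against the pullback of $\psi_i^d$, and the same result $\pi^*\alpha_{g,n}-\alpha_{g,n+1}=-\log(1-\tfrac12\psi_{n+1})$. Two points there need tightening.

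\begin{itemize}
\item The boundary term does not pull back ``to the corresponding sum'' term by term. Pulling back the node classes $\psi_\bullet^k\psi_\circ^\ell$ produces codimension-two corrections on the stratum where $n+1$ sits on a rational component between the two sides. These cancel only because of the alternating signs in $\sum_{k+\ell=d-1}\psi_\bullet^k(-\psi_\circ)^\ell$, which make them a telescoping sum. You should say so.
\item Fix the signs in the $D_{i,n+1}$ matching. In fact $(\pi^*\psi_i)^d=\psi_i^d-(\xi_{0,\{i,n+1\}})_*(\psi_\circ^{d-1})$. The cancellation is only needed, and only holds, for odd $d$; the even-$d$ terms are absent because $B_{d+1}(\tfrac12)=0$.
\end{itemize}

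The step you call the main obstacle closes in one line in the paper, so your detour is unnecessary. Multiply instead of divide: $(1-\tfrac12\psi_{n+1})\,\pi^*\exp(\alpha_{g,n})=\exp(\alpha_{g,n+1})$, and take the degree-$(2g-1+n)$ part. The term $\pi^*\exp(\alpha_{g,n})|_{2g-1+n}$ vanishes. The reason is that the proof of the preceding lemma identifies $\exp(\alpha_{g,n})$, as a full untruncated class, with $2^{1-g}\big(p^+_*c(\cF)-p^-_*c(\cF)\big)$. Here $c(\cF)$ has no component above degree $\rk\cF=2g-2+n$, and the finite maps $p^\pm$ preserve degree. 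So only $-\tfrac12\psi_{n+1}\pi^*\exp(\alpha_{g,n})|_{2g-2+n}$ survives, and the factor $(-2)$ in the normalization finishes the proof. The identity you were after, $\sum_k 2^{-k}\psi_{n+1}^{k+1}\exp(\alpha_{g,n+1})|_{2g-2+n-k}=-2\exp(\alpha_{g,n+1})|_{2g-1+n}$, is literally equivalent to this vanishing.

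Your fallback via $0\to\pi^*\pi_*L_\text{f}\to\pi_*L'_\text{f}\to L'_\text{f}|_{z_{n+1}}\to0$ is Norbury's original argument for $\Theta_{g,n}$, and it does work. Its ``taking top degrees'' uses the same rank vanishing, now applied to $c(\cF_n)$ on the spin side. But it has two costs:
\begin{itemize}
\item It makes your steps one to three redundant.
\item It needs two things you only assert: exactness of that sequence over the boundary of the spin moduli space, where the universal curves differ by contracting the bubble containing $z_{n+1}$; and base change of the Arf-signed pushforward along the forgetful maps.
\end{itemize}
The paper's route avoids all further spin-moduli geometry and stays entirely on $\bM_{g,n}$.
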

\begin{proof}
    Let us compute $\pi^*\alpha_{g,n}$ for $\alpha_{g,n}$ defined in \eqref{eq:alpha gn definition}. We use the elementary pullbacks
    \begin{subequations}
    \begin{align}
        \pi^* \kappa_d&=\kappa_d-\psi_{n+1}^d\ , \label{eq:kappad forgetful pullback}\\
        \pi^* \psi_i^d&=\psi_i^d-(\xi_{0,\{i,n+1\}})_*(\psi_\circ^{d-1})\ . \label{eq:psi^d forgetful pullback}
    \end{align}
    \end{subequations}
    Here, $\psi_\circ$ in the second line is the $\psi$-class associated to the node of the second factor of $\bM_{0,3} \times \bM_{g,n}\subset \bM_{g,n+1}$. Since $\bM_{0,3}$ has vanishing dimension, the $\psi$-class on the $\bM_{0,3}$ factor does not appear.
We also have
\begin{align}
    \pi^* ((\xi_{h,I})_*\psi_\bullet^k \psi_\circ^\ell)&=(\xi_{h,I \cup \{n+1\}})_*\big(\pi^*(\psi_\bullet^k) \psi_\circ^\ell \big)+(\xi_{h,I})_*\big(\psi_\bullet^k \pi^*(\psi_\circ^\ell) \big) \\
    &=(\xi_{h,I \cup \{n+1\}})_* (\psi_\bullet^k \psi_\circ^\ell)+(\xi_{h,I})_* (\psi_\bullet^k \psi_\circ^\ell)-(\xi_{\Gamma_{h,I}})_*\big(\psi_\bullet^{k-1} \psi_\circ^\ell+\psi_\bullet^{k} \psi_\circ^{\ell-1}\big)\ .
\end{align}
Here
\be 
\Gamma_{h,I}=\begin{tikzpicture}[baseline={([yshift=-.5ex]current bounding box.center)},scale=.6]
    \node[shape=circle,draw=black, very thick, fill=white, minimum size=6mm, inner sep=0pt] (A) at (-2.5,0) {$h$};
    \node[shape=circle,draw=black, very thick, fill=white, minimum size=6mm, inner sep=0pt] (B) at (0,0) {$0$};
    \node[shape=circle,draw=black, very thick, fill=white, minimum size=6mm, inner sep=0pt] (C) at (2.5,0) {$h'$};
    \draw[thick] (A) to node[above] {$\psi_\bullet\quad$} (B);
    \draw[thick] (B) to node[above] {$\quad\psi_\circ$} (C);
    \draw[thick] (-4,.8) to (A);
    \draw[thick] (-4,.6) to (A);
    \node at (-3.5,.1) {$\vdots$};
    \draw[thick] (-4,-.8) to (A);
    \draw[thick] (4,.8) to (C);
    \draw[thick] (4,.6) to (C);
    \node at (3.5,.1) {$\vdots$};
    \draw[thick] (4,-.8) to (C);
    \node at (-4.3,0) {$I$};
    \node at (4.3,0) {$I^c$};
    \draw[thick] (B) to (0,1.3) node[above] {$n+1$};
\end{tikzpicture}\ ,
\ee
where $h'=g-h$.
We abuse notation slightly, since $\psi_\bullet$ and $\psi_\circ$ in the last term are not associated to the same edge, but instead of the indicated half-legs in the stable graph. The term involving $\psi_\bullet^{k-1}$ is absent when $k=0$ and similarly for $\psi_\circ^{\ell-1}$.

We now apply this pullback to the boundary pushforward of the class $\frac{\psi_\bullet^d-(-\psi_\circ)^d}{\psi_\bullet+\psi_\circ}=\sum_{k+\ell=d-1} \psi_\bullet^k(-\psi_\circ)^\ell$ that appears in \eqref{eq:alpha gn definition} with the result
\begin{multline} 
\pi^*\bigg((\xi_{h,I})_*\Big(\frac{\psi_\bullet^d-(-\psi_\circ)^d}{\psi_\bullet+\psi_\circ}\Big) \bigg)=\big((\xi_{h,I})_*+(\xi_{h,I \cup \{n+1\}})_*\big)\Big(\frac{\psi_\bullet^d-(-\psi_\circ)^d}{\psi_\bullet+\psi_\circ}\Big) \\
-\sum_{k+\ell=d-1} (-1)^\ell(\xi_{\Gamma_{h,I}})_*\big(\psi_\bullet^{k-1} \psi_\circ^\ell+\psi_\bullet^{k} \psi_\circ^{\ell-1}\big)\ . \label{eq:separating boundary pushforward forgetful pullback}
\end{multline}
Due to the alternating sign, the last term is a telescoping sum and cancels out. The boundary terms in the alternating sum are by definition absent. We similarly find
\be 
\pi^*\bigg((\xi_{\text{irr}})_*\Big(\frac{\psi_\bullet^d-(-\psi_\circ)^d}{\psi_\bullet+\psi_\circ}\Big) \bigg)=(\xi_{\mathrm{irr}})_*\Big(\frac{\psi_\bullet^d-(-\psi_\circ)^d}{\psi_\bullet+\psi_\circ}\Big)\ ,\label{eq:non-separating boundary pushforward forgetful pullback}
\ee
with the codimension-2 boundary terms cancelling.

It now almost follows that the boundary contribution of \eqref{eq:alpha gn definition} pulls back to the corresponding contribution on $\bM_{g,n+1}$. Indeed, the pushforwards $(\xi_{h,I})_*$ and $(\xi_{h,I \cup \{n+1\}})_*$ for $I \subset [n]$ and $0 \le h \le g$ subject to stability appearing in \eqref{eq:separating boundary pushforward forgetful pullback} as well as $(\xi_\mathrm{irr})_*$ appearing in \eqref{eq:non-separating boundary pushforward forgetful pullback} account for almost all boundary pushforwards. The only exception is the pushforward $(\xi_{0,\{i,n+1\}})_*$ for $i \in \{1,\dots,n\}$. It is instead produced by the pullback of the class $\psi_i$. So in the pullback of $\alpha_{g,n}$, we find all the terms of $\alpha_{g,n+1}$, except for the $\psi$-class correction of the pullback of the $\kappa$-classes \eqref{eq:kappad forgetful pullback} and the class $\psi_{n+1}^d$ that is missing from the second term of $\pi^* \alpha_{g,n}$.
With this, we find
    \be 
        \pi^*\alpha_{g,n}-\alpha_{g,n+1}=\sum_{d \ge 1} \frac{B_{d+1}(\frac{3}{2})-B_{d+1}(\frac{1}{2})}{d(d+1)} \psi_{n+1}^d \\
        =\sum_{d \ge 1} \frac{\psi_{n+1}^d}{2^d d}=-\log\big(1-\tfrac{1}{2} \psi_{n+1}\big)\ .
    \ee
We used that $B_{d+1}(\frac{3}{2})-B_{d+1}(\frac{1}{2})=(d+1)2^{-d}$, which follows directly from the generating function of Bernoulli numbers.
    Thus
    \be 
        \big(1-\tfrac{1}{2}\psi_{n+1}\big) \pi^* \exp(\alpha_{g,n})=\exp(\alpha_{g,n+1})\ .
    \ee
    We can lastly restrict this to degree $2g-1+n$. On the LHS, only $-\frac{1}{2} \psi_{n+1} \pi^* \exp(\alpha_{g,n})$ survives, since $\exp(\alpha_{g,n})$ doesn't have support in degrees higher than $2g-2+n$. With the prefactor $(-2)^{2g-2+n}$ in \eqref{eq:Thetazeta exponential}, the lemma follows.
\end{proof}
\begin{proof}[Proof of Proposition \ref{prop:Arf Theta vanishing}]
    We can now finish the initial proposition. The point of the previous lemmas was to show that $\Tgn^\zeta$ satisfies the same properties as Norbury's Theta class \cite{Norbury:2017eih}. We can then use \cite[Proposition 3.2]{Norbury:2017eih}, whose hypotheses are: (i) $\Theta_{g,n}^\zeta$ is of pure degree $2g-2+n$, (ii) the CohFT factorization axioms (Lemma~\ref{lem:CohFT}), and (iii) the modified unit/dilaton axiom (Lemma~\ref{lem:unit}). All three have been verified above. This proposition shows that these properties determine any intersection number 
    \be 
        \int_{\bM_{g,n}} \Tgn^\zeta \prod_{i=1}^n \psi_i^{d_i} \prod_{d \ge 1} \kappa_d^{m_d} \label{eq:Thetagn zeta general intersection number}
    \ee
    recursively in terms of the initial value $\int_{\bM_{1,1}}\Theta_{1,1}^\zeta$. A direct computation gives
    \be 
        \int_{\bM_{1,1}} \Theta_{1,1}^\zeta =\frac{-2}{24}\int_{\bM_{1,1}}\Big(-11 \kappa_1-\psi_1+\frac{1}{2}(\xi_\mathrm{irr})_*(1)\Big)=0\ , 
    \ee
    where we used $\int_{\bM_{1,1}} \kappa_1=\int_{\bM_{1,1}} \psi_1=\frac{1}{24}$ and $\int_{\bM_{1,1}}(\xi_\mathrm{irr})_*(1)=\int_{\bM_{0,3}} 1=1$. Thus $\int_{\bM_{1,1}} \Theta_{1,1}^\zeta$ vanishes, which implies that all intersection numbers of the form \eqref{eq:Thetagn zeta general intersection number} also vanish.
\end{proof}
\bibliographystyle{JHEP}
\bibliography{bib}
\end{document}